\theoremstyle{plain}
\newtheorem{theorem}{Theorem}
\newtheorem{lemma}[theorem]{Lemma}
\newtheorem{corollary}[theorem]{Corollary}
\theoremstyle{definition}
\newtheorem{definition}[theorem]{Definition}
\newtheorem{remark}[theorem]{Remark}
\numberwithin{theorem}{chapter}
\pgfplotsset{compat=1.18}
\newtheorem{exercise}[theorem]{Exercise}
\newcommand{\dd}{\mathop{}\!\mathrm{d}}
\renewcommand{\L}{\mathcal{L}}
\newcommand{\F}{\mathcal{F}}
\newcommand{\ee}{\mathrm{e}}
\renewcommand{\Re}{\mathop{}\!\mathrm{Re}}
\renewcommand{\Im}{\mathop{}\!\mathrm{Im}}
\newcommand{\sign}{\mathop{}\!\mathrm{sign}}
\newcommand{\res}{R}
\newcommand{\gr}{r}
\begin{document}
%
%
%
\title{Age-Structured Population Dynamics\thanks{This chapter is intended to become part of a volume edited by D. Breda, R. Vermiglio and J. Wu in the Springer Book Series CISM International Centre for Mechanical Sciences, following the school ``Delays and Structures in Dynamical Systems: Modeling, Analysis and Numerical Methods'', November 20--24, 2023.}}

%
%
\author{%
    Odo Diekmann\textsuperscript{a}
    and 
    Francesca Scarabel\textsuperscript{b,c}
    \\ \smallskip\small
    \textsuperscript{a} 
    Department of Mathematics, Utrecht University, Utrecht, The Netherlands                
    \\
    \textsuperscript{b} 
    School of Mathematics, University of Leeds, Leeds, United Kingdom
    \\
    \textsuperscript{c}
    Computational Dynamics Laboratory,
    Department of Mathematics, Computer Science and Physics,
    University of Udine, Italy
    }
    \maketitle
%
%
%
%
    \begin{abstract}
    This chapter reviews some aspects of the theory of age-structured models of populations with finite maximum age. We formulate both the renewal equation for the birth rate and the partial differential equation for the age density, and show their equivalence. Next, we define and discuss central concepts in population dynamics, like the basic reproduction number $R_0$, the Malthusian parameter $\gr$, and the stable age distribution. We briefly review the sun-star theory that turns the birth term into a bounded additive perturbation, thus allowing to develop stability and bifurcation theory along standard lines. Finally, we review the pseudospectral approximation of the infinite-dimensional age-structured models by means of a finite system of ordinary differential equations, which allows to perform numerical bifurcation analysis with existing software tools. Here, Nicholson’s blowfly equation serves as a worked example.
    \end{abstract}

\CCLsection{Motivation for Structured Models (and the Particularities of Age)}
\label{s:introduction}

The size of a population changes in the course of time. Indeed, the population size decreases due to the death of individuals, but increases when new individuals are born. In between birth and death, an individual develops. The basic premise of structured population models \citep{DeRoosPersson2013,MetzDiekmann1986} is that the behavior of an individual (concerning, e.g., the consumption of food, reproduction, chance to become a victim of a predator) depends on its state (called $i$-state to indicate that it refers to an individual) and that we need to model how the $i$-state changes in time. In \citep{DiekmannScarabelSize} we shall take the \emph{size} of an individual as its $i$-state and introduce as one of the model ingredients a description of how the growth rate of an individual depends on its size and on the prevailing food concentration (we shall call the food concentration the \emph{environmental condition}, since it describes the relevant properties of the world in which the individual lives). In this chapter, however, we shall work with \emph{age} as a phenomenological and statistical proxy for a potentially very high-dimensional `true' $i$-state, i.e., an $i$-state with a physiological interpretation. This is of course reminiscent of human demography, where survival and cumulative offspring profiles provide a statistical description of the health and the fertility of individuals (in \cite{MetzDiekmann1986}, the first section of the chapter on age dependence is titled `Age as a substitute for comprehension').

In both chapters we consider one-sex models. More precisely, we `count' only females, and reproduction refers to mothers who produce daughters. Implicitly, we assume a fixed sex ratio.

The definition of `age' entails two key characteristics:
\emph{(i)} it equals zero at birth, and 
\emph{(ii)} it increases in time at the fixed speed $1$. 
One of the main reasons for introducing age structure in a model is that it allows to incorporate maturation delay. When in biology we make a distinction between juveniles and adults, this indicates in particular that the first cannot reproduce while the latter can. A newborn individual is a juvenile, and it reaches adulthood only at a certain age, so after some time has elapsed.

The \emph{environmental conditions} (i.e., all those external variables that influence the individual rates) can be:
\begin{enumerate}[noitemsep,nolistsep]
\item[\emph{(i)}] constant in time,
\item[\emph{(ii)}] variable in time (but independent of the population),
\item[\emph{(iii)}] (partly) determined by feedback.
\end{enumerate}

In the first two cases, we deal with a linear model. In the third case, the model is nonlinear, since density dependence is incorporated: individuals interact indirectly since they are both \emph{sensitive} to the environmental condition and they exert an \emph{impact} on it. For instance, for a prey, the predator density is (part of) the environmental condition: sensitivity refers to the risk of falling victim, while impact refers to the fact that a victim contributes to the predator density in the future, since it helps the predator to survive and reproduce (of course the immediate short-term effect may be that the predator is satiated and stops hunting). Likewise, for a predator the prey density is (part of) the environmental condition: sensitivity refers to the need to eat, and impact refers to the fact that, by consumption, the availability of food is decreased.

\bigskip
In the next section, we consider age-structured population dynamics in a constant environment. We introduce the age-specific survival probability~$\F$ and the expected age-specific cumulative number of offspring~$\L$ as the two model ingredients, and formulate the renewal equation for the $p$-level birth rate~$b$ as a first mathematical formulation of the model. We provide a definition of `delay equation' and show how to interpret the renewal equation (RE) as a delay equation. 

After sketching in Section~\ref{s:RE} how to solve the RE constructively, in Section~\ref{s:dynsyst} we describe the dynamical system perspective of delay equations (meaning, in this linear autonomous situation, that we introduce a semigroup of bounded linear operators and its infinitesimal generator). 

In Section~\ref{s:PDE} we replace as modeling ingredients $\F$ by the age-specific death rate $\mu(a)$ and $\L$ by the age-specific fertility $\beta(a)$, and use these new ingredients to give a reformulation of the model in terms of a partial differential equation (PDE) for the age-density. We delineate how the semigroups corresponding to, respectively, the RE and the PDE formulation relate to each other.

Section~\ref{s:asymptotic} focuses on the asymptotic large time behavior of the dynamical system, which is described in terms of the unique real root $\gr$ (a.k.a.\ the Malthusian parameter) of the Euler--Lotka characteristic equation. We also define the basic reproduction number $R_0$, and we show that $\sign(R_0-1)=\sign(\gr)$.

In Section~\ref{s:widening} we anticipate sun-star calculus by motivating the embedding of $L^1$ in the space of measures, while observing that the concomitant loss of strong continuity is compensated by the gain of a second, weaker, topology. 

In Section~\ref{s:nicholson} we introduce our key nonlinear example: Nicholson's blowfly model. We perform a pen and paper stability and bifurcation analysis of the model. While doing so, we observe the need for: \emph{(i)} theory to justify some of our assertions, and \emph{(ii)} tools to perform a more encompassing numerical bifurcation analysis. 
We discuss the latter in Section~\ref{s:pseudospectral}, where we briefly describe the pseudospectral approximation method that allows to approximate a general RE with a system of ordinary differential equations (ODE). The numerical bifurcation analysis can then be performed with widely available software for ODE, as we show by way of Nicholson's blowfly equation as an example. 

The chapter ends with three appendices. 
In Appendix~\ref{s:ODE} we formulate necessary and sufficient conditions on the model ingredients for the reduction of the infinite-dimensional formulations to a finite system of ODE.
Appendix~\ref{s:sun-star} is devoted to a brief sketch of sun-star calculus, which yields the required theoretical underpinning of linearized stability results. 
In Appendix~\ref{s:NGO} we focus on the basic reproduction number and the next generation operator. 

Throughout our presentation, we sacrifice generality in order to gain readability. To compensate, we include pointers to the literature dealing with more general settings, such as for instance equations with infinite delay.

\CCLsection{Constant Environment}
\label{s:constant_env}
In this section, we consider an age-structured population living in a constant environment.
The model is fully determined by the basic modeling ingredient $\L(\dd a)$, describing the \emph{expected} number of daughters produced by an individual of age $a$ in an infinitesimal time interval of length $\dd a$.\footnote{Note that, mathematically, $\L$ so defined describes a \emph{measure} on $\mathbb{R}_+$.}
The word `expected' here is used to account, in a deterministic framework, for potential randomness and heterogeneity of individuals. 
The expected total number of offspring in an individual's lifetime is called the \emph{basic reproduction number}, and is denoted by $R_0$. So, here $R_0 = \L(\infty) = \int_0^\infty \L(\dd a)$.

The basic model variable is the $p$-level ($p$ for `population') birth rate $b(t)$ at time $t$, which satisfies the \emph{renewal equation} (RE) 
\begin{equation} \label{RE-L}
    b(t) = \int_0^\infty b(t-a) \L(\dd a).
\end{equation}
When, by assumption, the measure $\L$ has a density, we can alternatively write \eqref{RE-L} as 
\begin{equation} \label{RE}
b(t) = \int_0^\infty b(t-a) k(a) \dd a,
\end{equation}
where $k=\L'$, with $\L'$ denoting the derivative of $\L$. The kernel $k$ is often specified as 
\begin{equation}\label{k}
k(a) := \beta(a)\F(a), 
\end{equation}
where $\F(a)$ is the \emph{survival probability} until age $a$, assumed to be a non-increasing function of age, and 
\begin{equation*}
    \beta(a) := \frac{1}{\F(a)}\L'(a)
\end{equation*}
is the age-specific \emph{fertility rate}.

We also define the age-specific \emph{death rate} 
\begin{equation}\label{mu}
    \mu(a) := - \frac{\dd}{\dd a} \ln \F(a).
\end{equation}
It follows that 
$\F$ satisfies $\F' = -\mu \F$ with initial condition $\F(0)=1$, and, consequently, 
\begin{equation}\label{F}
    \F(a) = \mathrm{e}^{-\int_0^a \mu(\alpha)\dd \alpha}. 
\end{equation}

For the classical theory of RE we refer to \cite{BellmanCooke1963, Feller1941, Lotka1939}.
Equation \eqref{RE} is an example of a \emph{delay equation}, by which we mean a rule for extending a function of time towards the future on the basis of its (assumed to be) known past. 

\CCLsection{Constructing the Future Population-Level Birth Rate} 
\label{s:RE}

Even though it is not exactly realistic, let us assume that we know $b(t)$ up to $t=0$. More precisely, assume that
\begin{equation}\label{history-b}
b(\theta) = \phi(\theta), \quad \theta \leq 0,
\end{equation}
with $\phi$ a given non-negative integrable function (having certain additional properties specified below). The function $\phi$ describes the (assumed to be) known history of $b$, and we consider \eqref{history-b} as an \emph{initial condition} for the RE~\eqref{RE}.
We next rewrite \eqref{RE} for $t > 0$ in the form
\begin{equation}\label{RE-conv}
b = k \ast b + f,
\end{equation}
where 
the \emph{forcing function} $f$ is defined by
\begin{equation}\label{f}
f(t) = \int_t^\infty \phi(t-a)k(a) \dd a = \int_{-\infty}^0 \phi(\theta) k(t-\theta) \dd\theta,
\end{equation}
and $\ast$ denotes the usual \emph{convolution product}, defined by
\begin{equation*}
f \ast g (t):= \int_0^t f(a)g(t-a) \dd a .
\end{equation*}
We assume that both $k$ and $f$ belong to $L^1_{loc}[0,\infty)$, and note that the bilinear convolution product maps this space to itself. With an appeal to the `almost everywhere' aspect of equivalence classes of integrable functions, we leave it open whether $b(0)$ is defined as $\phi(0)$ or as $f(0)$.
We solve \eqref{RE-conv} constructively by successive approximation which, in the present context, amounts to \emph{generation expansion}:
\begin{equation}\label{generation-expansion}
b = f + k \ast f + k^{2\ast} \ast f + \cdots
\end{equation}

\begin{exercise}
\begin{enumerate}[label=(\roman*),noitemsep,nolistsep]
\item Provide a precise inductive definition of the convolution powers of $k$. 
\item Interpret the terms in the series (and by doing so, explain the terminology `generation expansion').
\item Without worrying about convergence, define the \emph{resolvent} kernel $\res$ corresponding to $k$ by
\begin{equation}\label{resolvent}
\res := \sum_{j=1}^\infty k^{j \ast},
\end{equation}
and verify that 
\begin{equation*}
\res = k \ast \res + k \quad \text{ and } \quad \res = \res \ast k + k,
\end{equation*}
while \eqref{generation-expansion} amounts to 
\begin{equation}\label{b-solution}
b = f + \res \ast f.
\end{equation}
\item \label{ex:k} Provide biological motivation for the assumption that $k$ has compact support; next, provide biological motivation for the assumption that the support of $k$ is bounded away from zero; check that the latter assumption guarantees that, for any given finite time $t$, the generation expansion has only finitely many non-zero terms (and that, consequently, it is justified that we did not worry about convergence; if you do not like this escapism from a mathematical challenge, feel free to address the issue of convergence or consult \citet{Gripenberg1990}).
\end{enumerate}
\end{exercise}

Motivated by \ref{ex:k} above, we assume from now on that $k$ has compact support, say $[0,a_{max}]$. In this case, it is useless to prescribe $b(\theta)$ for $\theta < - a_{max}$. 
Consequently, we choose $\phi$ from the positive cone in the Banach space
\begin{equation}\label{X}
X = L^1 [-a_{max},0]. 
\end{equation}
Conclusion: under very reasonable assumptions concerning the kernel $k$ and the initial history $\phi$, we can constructively define $b$ for $t > 0$ by generation expansion.

As an aside we mention that, for very special kernels $k$, the RE~\eqref{RE} is a system of ODEs in disguise. In Appendix~\ref{s:ODE} we provide a more precise and complete elaboration of this statement, as well as references to the literature.


\CCLsection{The Dynamical System Perspective} 
\label{s:dynsyst}

Time is a one-dimensional variable, and yet we shall work with two variables, viz., $t$, the time underlying the dynamical system, and $\theta$, the bookkeeping variable specifying how far back in history, relative to $t$, a birth took place. The dynamical system perspective for delay equations corresponds to making the `relative to $t$' explicit by shifting along the extended function, as expressed in the notation
\begin{equation*}
b_t(\theta) := b(t + \theta) ,  \quad \theta \leq 0. 
\end{equation*}
In words: we obtain a new function of $\theta$ by shifting over distance $t$ and we consider $b_t$ as the state at time $t$. Please note that $b$ depends on $\phi$ through \eqref{history-b} and \eqref{generation-expansion}, but that, for reasons of readability, we do not express this in the notation. The family of next-state operators $\{T(t)\}_{t\geq 0}$ on the Banach space $X$ is now defined by
\begin{equation}\label{T-semigroup}
T(t)\phi = b_t.
\end{equation}

\begin{exercise}
\begin{enumerate}[label=(\roman*),noitemsep,nolistsep]
\item Verify that $T(t)$ maps equivalence classes to equivalence classes (hint: essentially what matters here is the corresponding property for the convolution product).
\item Verify that $T(0)$ is the identity.
\item Verify that the semigroup property $T(t+s) = T(t) T(s)$, for $t,s \geq 0$, derives from the uniqueness of solutions of \eqref{RE} with \eqref{history-b} which, in turn, follows (check!) from the representation \eqref{generation-expansion}. 
\item Do you agree that the prefix `semi' is needed here? If not, delve into the difficulty of going backward in time. 
\item Verify that $\{T(t)\}_{t\geq 0}$ is strongly continuous, i.e., that 
\begin{equation*}
    \lim_{t \to 0^+} \| T(t)\phi - \phi\|_X = 0,
\end{equation*}
since translation is continuous in $L^1$ \citep{ButzerBerens1967}. 
\end{enumerate}
\end{exercise}

The \emph{infinitesimal generator} $A$ of a strongly continuous semigroup of bounded linear operators $\{T(t)\}_{t\geq 0}$ on a Banach space is the derivative of $t \mapsto T(t)$ at $t=0$, i.e.,
\begin{equation*}
    A\phi = \lim_{t \to 0^+} \frac{1}{t}( T(t)\phi - \phi), \quad \phi \in D(A) = \{ \phi \colon \text{ the limit exists}\}. 
\end{equation*}

This definition entails that, quite in general, the generator of translation of functions acts by differentiating those functions. But details concerning the domain of definition matter when we want to be more precise and specific. For instance, when the function space is $L^1(\mathbb{R})$, the domain consists of absolutely continuous functions, see \citet{ButzerBerens1967} and \citet[Appendix II]{Diekmann1995Delay}.
But when the functions are defined on a bounded interval, the precise way in which we extend them matters!

The notion of \emph{absolute continuity} of a function can be defined in multiple equivalent ways, cf.~the references above. In the present situation, there are two relevant aspects, viz., the fact that $\phi$ belongs to the domain of the generator $A$ of $\{T(t)\}_{t\geq 0}$ on $X$ defined by \eqref{T-semigroup} if and only if the \emph{extended} function is absolutely continuous, and the fact that a function is absolutely continuous if and only if it is the primitive of an $L^1$ function, so if it has a derivative represented by an element of $L^1$. 

Notation: $AC$ denotes the space of absolutely continuous functions (in the present situation these functions are defined on $[-a_{max},0]$, but we will use the same notation in other situations, trusting that the context makes it clear on what interval the functions are defined).

We are now ready to state the following result. 
\begin{theorem}
The infinitesimal generator $A$ of the semigroup $\{T(t)\}_{t\geq 0}$ on $X$ defined by \eqref{T-semigroup} is given by
\begin{equation}\label{IG}
\begin{array}{l}
A \phi = \phi', \quad \phi \in D(A), \\
\displaystyle{D(A) = \left\{\phi \in AC \colon \ \phi(0) = \int_0^{a_{max}} k(\alpha) \phi(-\alpha) \dd \alpha \right\} }. 
\end{array}
\end{equation}
\end{theorem}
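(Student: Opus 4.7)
\emph{Plan.} I would establish the two inclusions in the characterization of $D(A)$ separately. The forward direction ($\supseteq$) is a regularity-plus-boundary-condition argument; the reverse direction ($\subseteq$) splits naturally into an interior part (yielding $A\phi = \phi'$ and $\phi \in AC$) and a boundary part (yielding the non-local condition at $\theta = 0$), and the latter is the main obstacle.

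For $\supseteq$: let $\phi \in AC[-a_{max}, 0]$ satisfy the boundary condition, and let $b$ be its RE-extension to $[-a_{max}, \infty)$. Split $[-a_{max}, 0] = [-a_{max}, -t] \cup [-t, 0]$ for small $t > 0$. On $[-a_{max}, -t]$ the difference quotient $(T(t)\phi - \phi)/t$ equals $(\phi(\cdot + t) - \phi(\cdot))/t$, which converges to $\phi'$ in $L^1$ by absolute continuity of $\phi$ together with strong continuity of translation in $L^1$ applied to $\phi'$. On $[-t, 0]$ it equals $(b(\cdot + t) - \phi(\cdot))/t$; the boundary condition identifies $b(0^+) = \int_0^{a_{max}} k(\alpha)\phi(-\alpha)\dd\alpha$ with $\phi(0)$, so $b$ is continuous at $0$, and the estimate $t^{-1}\int_0^t |b(s) - \phi(s - t)|\dd s \to 0$ shows that the contribution from this slab vanishes in $L^1$ as $t \to 0^+$. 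Thus $\phi \in D(A)$ and $A\phi = \phi'$.

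For $\subseteq$: let $\phi \in D(A)$ and set $\psi := A\phi \in X$. Fix $\epsilon \in (0, a_{max})$; for $t \in (0, \epsilon)$ and $\theta \in [-a_{max}, -\epsilon]$ one has $t + \theta < 0$, so $T(t)\phi(\theta) = \phi(t + \theta)$ and the difference quotient reduces to pure translation of $\phi$. Its $L^1$-convergence identifies $\psi$ with the distributional derivative of $\phi$ on $(-a_{max}, -\epsilon)$; since $\psi \in L^1$, this forces $\phi \in AC[-a_{max}, -\epsilon]$ with $\phi' = \psi$. Letting $\epsilon \to 0^+$ yields $\phi \in AC[-a_{max}, 0]$ with $\phi' = \psi$ and a well-defined boundary value $\phi(0)$. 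It remains to derive the compatibility relation $\phi(0) = \int_0^{a_{max}} k(\alpha)\phi(-\alpha)\dd\alpha =: b(0^+)$.

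For this final step, I would argue by contradiction on the complementary slab $[-t, 0]$. There $T(t)\phi(\theta) = b(t + \theta)$ with $t + \theta \in [0, t]$, so $T(t)\phi(\theta) \approx b(0^+)$ while $\phi(\theta) \approx \phi(0)$; the difference quotient on $[-t, 0]$ is therefore approximately $(b(0^+) - \phi(0))/t$, giving an $L^1$-mass over the slab of order $|b(0^+) - \phi(0)|$, bounded below uniformly in $t$. If $\phi(0) \neq b(0^+)$, the family $(T(t)\phi - \phi)/t$ would concentrate a non-vanishing mass on an interval of vanishing Lebesgue measure, which is incompatible with convergence in $X = L^1$ to the honest $L^1$-function $\psi$. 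Error control via absolute continuity of $\phi$ near $0$ and continuity of $b$ at $0^+$ (which in turn follows from the RE and the integrability of $k$) turns this heuristic into a rigorous argument, yielding $\phi(0) = b(0^+)$ and completing the proof.
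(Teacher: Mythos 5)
Your proof is correct, but it takes a genuinely different route from the paper's. The paper gives no difference-quotient computation at all: it obtains the theorem by reduction to the known characterization of generators of translation semigroups (the domain consists exactly of those $\phi$ whose \emph{extended} function is absolutely continuous, citing Butzer--Berens and Appendix~II of the delay-equations book by Diekmann et al.), after which the compatibility condition $\phi(0)=f(0)$ is simply read off as the precise condition making the extension absolutely continuous across $\theta=0$ --- the extension $b=f+\res\ast f$ being automatically locally $AC$ on $(0,\infty)$ once $\phi\in AC$. You instead verify everything from first principles: the slab decomposition $[-a_{max},0]=[-a_{max},-t]\cup[-t,0]$ for sufficiency, the distributional-derivative identification on interior subintervals $[-a_{max},-\epsilon]$ for the regularity half of necessity, and --- the nicest piece --- the mass-concentration argument showing that a mismatch $\phi(0)\neq b(0^+)$ would force the difference quotients $(T(t)\phi-\phi)/t$ to carry $L^1$-mass of order $|b(0^+)-\phi(0)|$ on intervals of measure $t$, incompatible with $L^1$-convergence to the honest limit $\psi$ (since $\int_{-t}^0|\psi|\to 0$ by absolute continuity of the integral). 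What your route buys is self-containedness and an explicit mechanism for why the rule for extension lands in the domain rather than the action; what the paper's buys is brevity and the structural viewpoint (generators of shifts differentiate, and only the domain sees the rule). One ingredient you invoke in passing deserves a line if you write this up: the continuity of $b$ at $0^+$, i.e.\ $b(0^+)=f(0)$, uses both the boundedness of $\phi$ (available, since $AC$ on a compact interval implies continuity, and needed to get continuity of $f$ from $k\in L^1$ via continuity of translation) and local integrability of the resolvent $\res$ so that $(\res\ast f)(t)\to 0$ as $t\to 0^+$; the latter holds for any $k\in L^1_{loc}$, and trivially under the paper's standing assumptions on the support of $k$, but it should be stated rather than left implicit.
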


Remarkably, all information about the rule for extension defined by \eqref{RE} ends up in the \emph{domain}, and NOT in the \emph{action}! Indeed, the rule shows up as the compatibility condition that $\phi(0)$ and $f(0)$ should be the same, since this guarantees that not just the initial condition $\phi$ itself is $AC$, but also its extension.

As one can imagine, having crucial information incorporated in the domain is a nuisance. Notably, changing the rule for extension amounts to an \emph{unbounded} perturbation of $A$. This technical challenge has to be overcome, one way or another. This aspect gives the dynamical system theory of delay equations its special character and, we think, charm. 
   
For linear systems of ODE, the asymptotic large time behavior of the solutions can be deduced from knowledge of the eigenvalues and eigenvectors of the matrix. Similarly one can study the asymptotic properties of $\{T(t)\}_{t\geq 0}$, for instance by performing a spectral analysis of the generator~$A$. But before doing so, we first describe an equivalent alternative dynamical system approach for describing linear age-structured population growth.

\CCLsection{The PDE Approach} 
\label{s:PDE}

The population age-density at time $t$ is an element of $L^1([0,a_{max}])$. We denote this object by $n(t,\cdot)$. In terms of the survival probability $\F$ and the population birth rate $b$, $n$ is explicitly given by
\begin{equation}\label{n}
n(t,a) = b(t-a) \F(a). 
\end{equation}

\begin{exercise}
\begin{enumerate}[label=(\roman*),noitemsep,nolistsep]
\item At time $t$, the number of individuals with age at most $a$ is given by
\begin{equation}\label{Ncumulative}
N(t,a) = \int_0^a n(t,\alpha) \dd \alpha. 
\end{equation}
These individuals are the ones that were born at most $a$ units of time ago and survived till the present time $t$. Derive \eqref{n} by combining the last two sentences.
\item Check formally that $n$ satisfies the PDE
\begin{equation}\label{PDE}
\frac{\partial n}{\partial t} + \frac{\partial n}{\partial a} = -\mu n,
\end{equation}
with boundary condition
\begin{equation}\label{BC-PDE}
n(t,0) = b(t) = \int_0^{a_{max}} \beta(\alpha) n(t,\alpha) \dd \alpha,
\end{equation}
with $\mu$ defined by \eqref{mu}. 
\end{enumerate}
\end{exercise}

The standard mathematical approach is to first use modeling considerations to derive \eqref{PDE}-\eqref{BC-PDE} and next use integration along characteristics to derive \eqref{n} with $\F$ given by \eqref{F}. Here we gave a direct biological interpretation of \eqref{n}, so there is in fact no need to introduce the PDE, we only mention it to show the connection with the standard approach. The advantage is that we avoid having to make precise in which sense the PDE is satisfied. The latter does involve a non-negligible amount of work, since several things are needed. First, one may interpret $\frac{\partial}{\partial t}+\frac{\partial}{\partial a}$ as a directional derivative. But if $b$ is not $AC$, this does not suffice, and we should probably take recourse to solutions in the sense of distributions, see, for instance, \citet{Perthame2006Book} and \citet{Webb1985Book}.

The initial condition corresponding to \eqref{PDE}--\eqref{BC-PDE} reads
\begin{equation}\label{IC-PDE}
n(0,a) = \psi (a),
\end{equation}
with $\psi$ an element of $L^1[0,a_{max}]$. 
In case of an experiment starting at $t=0$, $\psi$ may arise out of manipulations of the experimenter. But if we are describing undisturbed dynamics, we may assume that \eqref{n} also holds for $t=0$ and conclude that $\psi$ and $\phi$ (the initial condition for the RE, see \eqref{history-b}) are related by 
\begin{equation}\label{psi-phi}
\psi (a) = \phi (-a) \F(a),
\end{equation}
which we write as
\begin{equation*}
\psi = L \phi,
\end{equation*}
with $L \colon L^1[-a_{max},0] \to L^1[0,a_{max}]$ defined by $(L \phi)(a) := \phi (-a) \F(a)$. 
The correspondence between $\phi$ and $\psi$ is illustrated in Figure \ref{fig:IC}.
We assume that $\F(a_{max}) > 0$ or, in other words, that $k(a) = 0$ for $a > a_{max}$ since $\beta$ has this property, and conclude that $L$ is invertible. Note that both $L$ and $L^{-1}$ are bounded linear operators.

\begin{exercise}\label{ex:5.2}
\begin{enumerate}[label=(\roman*),noitemsep,nolistsep]
\item Using the inverse of $L$, rewrite $f$ defined in \eqref{f} in terms of $\psi$. Reinterpret $f$ as the expected rate at which the subpopulation of individuals already alive at time zero gives birth at time $t$. Check that the expression for $f$ in terms of $\psi$ is completely in line with this interpretation for general $\psi$, simply since the conditional probability to be alive at age $a+t$, given that the individual is alive at age $a$, equals $\F(a+t)/\F(a)$.
\item Check that $n(t,a)$ is, for $t \geq 0$, given by 
\begin{equation}\label{n-solution}
n(t,a) = 
\begin{cases}
    b(t-a) \F(a), & \text{if } a<t, \\[5pt]
    \psi(a-t) \displaystyle{\frac{\F(a)}{\F(a-t)}}, & \text{if } a\geq t.
\end{cases}    
\end{equation}
\item Define, for $t \geq 0$, 
\begin{equation*}
S(t) \psi = n(t,\cdot)
\end{equation*}
and check that $S(t) = L T(t) L^{-1}$. Conclude that the semigroups $\{S(t)\}_{t\geq 0}$ and $\{T(t)\}_{t\geq 0}$ describe the same dynamical system in different coordinates.
\item \label{ex:5.2-iv} Derive the infinitesimal generator of $\{S(t)\}_{t\geq 0}$.
\end{enumerate}
\end{exercise}

Conclusion: the RE dynamical system and the PDE dynamical system differ in bookkeeping details, but describe exactly the same dynamics.

Cautionary note: if we allow $a_{max} = \infty$, there are subtle aspects of behavior at infinity that thwart the simple one-to-one correspondence as incorporated in $L$ above, see \citet{Barril2022}.

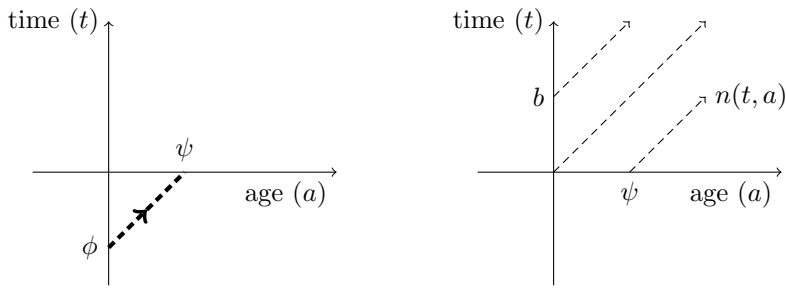
\begin{figure}[t]
    \centering
  \begin{tikzpicture}
    \draw [->] (-1,0) -- (3,0) node [anchor=north east] {age ($a$)};
    \draw [->] (0,-1.5) -- (0,2) node [anchor=east] {time ($t$)};
    \path [draw=black, ultra thick, text=black, densely dashed, ->] (0,-1) node [anchor=east] {$\phi$} -- (0.5,-0.5) node [anchor=south] {} ; 
    \path [draw=black, ultra thick, text=black, densely dashed] (0.5,-0.5) node [anchor=east] {} -- (1,0) node [anchor=south] {$\psi$} ; 
    \end{tikzpicture}%
\qquad\qquad 
    \begin{tikzpicture}
    \draw [->] (-1,0) -- (3,0) node [anchor=north east] {age ($a$)};
    \draw [->] (0,-1.5) -- (0,2) node [anchor=east] {time ($t$)};

    \path [draw=black, text=black, densely dashed,->] (1,0) node [anchor=north] {$\psi$} -- (2,1) node [anchor=west] {$n(t,a)$} ; 
    \path [draw=black, text=black, densely dashed,->] (0,1) node [anchor=east] {$b$} -- (1,2) node [anchor=east] {} ;     
    \path [draw=black, text=black, densely dashed,->] (0,0) node [anchor=east] {} -- (2,2) node [anchor=east] {} ; 
    
    \end{tikzpicture}
    \caption{Left: correspondence between the initial condition of the RE and the PDE, where the age-density $\psi$ at time $t=0$ is related to the initial birth rate history $\phi$ via \eqref{psi-phi}. Right: integration along characteristics maps $b(t)$ to $n(t,a)$ via \eqref{n-solution}.}
    \label{fig:IC}
\end{figure}


\CCLsection{Asymptotic Large Time Behavior}
\label{s:asymptotic}

First, we present a hand-waving derivation of the main results. Subsequently, we shall list various ways of providing rigorous proofs, with pointers to the relevant literature.

Equation \eqref{RE} is linear and translation invariant. This suggests to look for special solutions of the form
\begin{equation} \label{b-exponential}
b(t) = \ee^{\lambda t},
\end{equation}
with $\lambda$ to be determined. (The point is that exponential functions are precisely those functions for which translation amounts to multiplication by a constant!)

If we substitute \eqref{b-exponential} into \eqref{RE}, we obtain that $\lambda$ should satisfy the \emph{characteristic equation} 
\begin{equation}\label{CE}
\overline{k}(\lambda) = 1,
\end{equation}
named after Euler and Lotka. Here, $\overline{k}$ denotes the Laplace transform of the kernel $k$ defined in \eqref{k}. 
We assume that $k$ takes positive (nota bene: here and below we write `positive' even though `non-negative' would be more accurate) values and that $k$ is nontrivial (i.e., takes strictly positive values on a set of positive measure). 
This has important implications: if we restrict $\lambda$ to real values, then $\overline{k}$ is a strictly monotone decreasing function, with limit equal to zero for $\lambda$ tending to infinity. 
Hence \eqref{CE} has at most one real solution. 
And it does indeed have a real solution provided $\overline{k}$ assumes a value larger than one! In particular, \eqref{CE} has a solution $\gr > 0$ if and only if the \emph{basic reproduction number} $R_0$ defined by
\begin{equation}\label{R0-integral} 
R_0 := \overline{k}(0) = \int_0^\infty \beta(a) \F(a) \dd a
\end{equation}
exceeds one, see Figure \ref{fig:lotka-euler}. 
Note that $R_0$ is the expected number of daughters of a newborn individual, so the population is expected to grow (or decline, when $R_0 < 1$) with a factor $R_0$ from generation to generation.
In Appendix~\ref{s:NGO} we shall delineate a general approach for defining and computing the basic reproduction number. 

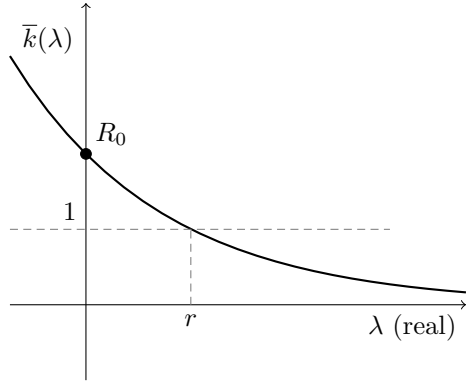
\begin{figure}[th]
    \centering
  \begin{tikzpicture}
    \draw [->] (-1,0) -- (5,0) node [anchor=north east] {$\lambda$ (real)};
    \draw [->] (0,-1) -- (0,4) node [anchor=east] {};
    \draw[color=black,thick,domain=-1:5] plot (\x,{2*exp(-0.5*\x)}) ;
    \node [anchor = south] at (-0.5,3.2) {$\overline{k}(\lambda)$};
    \draw [-,color=gray,densely dashed] (-1,1) -- (4,1);
    \node [anchor = south east] at (0,1) {$1$};
    \filldraw[black] (0,2) circle (2pt) node[anchor=south west]{$R_0$};
    \draw [-,color=gray,densely dashed] (1.38,0) node [anchor=north,color=black] {$\gr$} -- (1.38,1);
    \end{tikzpicture}
    \caption{Solution of the Euler--Lotka equation \eqref{CE} for $\lambda \in \mathbb{R}$. }
    \label{fig:lotka-euler}
\end{figure}
   
When $k$ has compact support, the real root $\gr$ also exists if $R_0 < 1$, since $\overline{k}(\lambda)$ tends to $+\infty$ when $\lambda$ tends to $-\infty$. For general $k$, it may happen that the tail obstructs the definition of $\overline{k}$ to the left of a line $\Re\lambda = r_c$ while $\overline{k}(r_c) < 1$; in that case, \eqref{CE} has no real solution. 

How about complex solutions of \eqref{CE}? Since
\begin{equation*}
| \overline{k}(\lambda) | \leq \overline{k}(\Re\lambda),
\end{equation*}
we know that no solution can have a real part exceeding $\gr$. But in fact there exists $\epsilon > 0$ such that for all solutions $\lambda \neq \gr$ the inequality
\begin{equation*}
\Re\lambda \leq \gr - \epsilon
\end{equation*}
holds.

\begin{exercise}
Prove this assertion.\\
Hints:  We already know that with $\epsilon = 0$ the inequality holds; use the Lemma of Riemann--Lebesgue to find a uniform upper bound for $\Im\lambda$ in a finite strip 
\begin{equation*}
\gr - \delta \leq \lambda \leq \gr 
\end{equation*}
for some $\delta > 0$;
next, observe that in a compact domain the analytic function $\overline{k} - 1$ has at most finitely many zeroes; so if we can show that there are no solutions other than $\gr$ on the line $\Re\lambda = \gr$, the existence of $\epsilon > 0$ follows; combine $\overline{k}(\gr) = 1$ with the fact that $\cos(\Im\lambda a) < 1$ for almost all $a$ if $\Im\lambda \neq 0$ to exclude solutions with $\Re\lambda = \gr$ and $\Im\lambda \neq 0$.
\end{exercise}

\begin{remark}
When starting from \eqref{RE-L}, rather than \eqref{RE}, one needs to exclude that the support of $\L$ lies in a cyclic subgroup of $\mathbb{R}$, cf.~\citet[Theorem~9.13]{Rudin1973Book}. 
Feller calls a kernel $\L$ `arithmetic' if the support does lie in such a cyclic subgroup, see Section~XI.1 and Definition~3 in Section~V.2 of \citet{Feller1971volII}. 
For such kernels, the RE \eqref{RE-L} is in fact a difference equation (in continuous time) and the characteristic equation \eqref{CE} has countably many roots on the line $\Re\lambda = \gr$. The imaginary parts of these roots form an additive group. Section~II.11 of \citet{MetzDiekmann1986}, entitled `The merry-go-round', provides an example of how such a situation can be hard to detect when one works with a PDE formulation.
\end{remark}

So far we looked at solutions of \eqref{RE} defined for all time. But in Section~\ref{s:RE} we used \eqref{history-b} to prescribe an initial history, and next defined $b$ for positive time as the unique solution of \eqref{RE-conv}.
The above observations about the roots of \eqref{CE} make it tempting to conjecture that, for $b$ defined in this manner, it holds that for some $c = c(\phi) > 0$
\begin{equation}\label{b-asympt}
b(t) \sim c \, \ee^{\gr t} \qquad \text{ for } t \to +\infty.
\end{equation}
(In words: there is a well-defined asymptotic growth rate $\gr$ that does not depend on the initial condition; the influence of the initial condition is completely captured by the scalar $c$ or, equivalently, by a translation of~$\ee^{\gr t}$.) 

We shall see below that \eqref{b-asympt} is indeed correct. And accordingly we shall interpret the identity
\begin{equation}\label{2.33}
\sign ( R_0 - 1 ) = \sign (\gr)
\end{equation}
as telling us that the generations are expected to grow if and only if the population size is expected to grow in real time. 

\begin{exercise}\label{ex:2.6.3}
Use \eqref{T-semigroup} to deduce that
\begin{equation}\label{2.34}
T(t)\phi \sim c(\phi) \ee^{\gr t} \phi_d \qquad \text{ for } t \to +\infty,
\end{equation}
with
\begin{equation*}
\phi_d(\theta) := \ee^{\gr \theta}.
\end{equation*}
(In words: the asymptotic large-time behavior of the semigroup $T(t)$ is characterized by exponential growth, with rate $\gr$, in the one-dimensional subspace spanned by $\phi_d$.)
\end{exercise}

\begin{exercise}\label{ex:2.6.4}
Use \eqref{n-solution} to deduce that, similarly, the age distribution exhibits, asymptotically for $t \to +\infty$, exponential growth with rate~$\gr$ within a one-dimensional subspace spanned by the so-called \emph{stable age distribution}~$\psi_d$ defined by
\begin{equation*}
\psi_d(a) := \ee^{-\gr a} \F(a).
\end{equation*}
\end{exercise}

\bigskip
There exists a plethora of methods to provide proofs, but in all of these `positivity' (based on order structure) is an essential ingredient. We mention:
\begin{enumerate}[noitemsep,nolistsep]
    \item Laplace transform method and Paley--Wiener Theorems: see \citet{Feller1941} and \citet{BellmanCooke1963} for early work, and \citet{Gripenberg1990} for the definitive treatment;
    \item Feller’s real analysis approach, see \citet[Chapter~XI]{Feller1971volII};
    \item The `Generalized Relative Entropy' method, see \citet{Perthame2006Book}; 
    \item Lower-bound functions and piecewise-deterministic Markov Processes, see \citet{LasotaMackey2013Book, Rudnicki2017Book}; 
    \item \label{method-5} Positive semigroup theory, based on a spectral analysis of the infinitesimal generator and on the Krein--Rutman Theorem, the infinite-dimensional variant of the Perron--Frobenius Theorem, see \citet{Arendt1986, Batkai2017, EngelNagel2000}. 
\end{enumerate}

\smallskip 
In the spirit of our focus on the RE, we here emphasize Method~1. A key point in this approach is that Laplace transformation converts the convolution product in \eqref{RE-conv} into an ordinary product. This leads to the explicit expression
\begin{equation*}
\overline{b} = ( 1 - \overline{k})^{-1} \overline{f},
\end{equation*}
and next, by inverse Laplace transformation, to a representation for $b(t)$ in the form of an improper contour integral. The singularities of the integrand are located at the points where $\overline{k}$ equals one, so at the solutions of \eqref{CE}. Next, one needs careful estimates to derive \eqref{b-asympt} by shifting the contour, see for instance \citet[Section~I.5]{Diekmann1995Delay}.
   
We also mention the Paley--Wiener result that for general (meaning: not necessarily positive and not necessarily with compact support) integrable kernels $k$, the resolvent $\res$, cf.~\eqref{resolvent}, is integrable over $[0,+\infty)$ if and only if \eqref{CE} has no solutions in the closed half-plane $\{\lambda \colon \Re\lambda \geq 0 \}$. 
This characterization of stability extends to systems of RE and was generalized by I.~M.~Gelfand to incorporate weight functions. We refer to the Theorems~2.4.1, 4.4.3 and 7.2.4 by \cite{Gripenberg1990}.

As we have seen in the Exercises \ref{ex:2.6.3} and \ref{ex:2.6.4}, \eqref{b-asympt} allows us to fully characterize the asymptotic large time behavior of the strongly continuous semigroups $\{T(t)\}_{t\geq 0}$ and $\{S(t)\}_{t\geq 0}$. In the literature, such characterizations are often derived with Method~\ref{method-5}.

\CCLsection{Widening the Framework}
\label{s:widening}
Both for the RE (with the history of the birth rate as the state) and for the PDE (with the current age distribution as the state), we chose $L^1$ as the state space, cf.~\eqref{X} and the line below \eqref{IC-PDE}. For describing aging (i.e., translation) and survival, this is a convenient choice. But when it comes to describing reproduction, problems arise. The inflow of newborns is concentrated in a point ($\theta=0$ in Section \ref{s:dynsyst} and $a=0$ in Section \ref{s:PDE}) and an element of $L^1$, being an equivalence class of functions, does not have a well-defined value in a point. In~\eqref{IG} and in Exercise~\ref{ex:5.2}\ref{ex:5.2-iv} we saw a mathematical manifestation of this difficulty: essential information is contained in the domain of the infinitesimal generator and \emph{not} in its action.

When we want to build a qualitative stability and bifurcation theory, this is a major stumbling block. As shown in detail by \cite{Diekmann2008Suns,Diekmann1995Delay}, perturbation theory for adjoint semigroups, often affectionately called \emph{sun-star calculus}, offers an efficient method to overcome this difficulty. 
In Appendix~\ref{s:sun-star}, we shall provide a concise summary of sun-star calculus. Here we shall introduce the key components in an ad hoc manner, tailored to the situation at hand. Thus we try to avoid to put readers off by the introduction of relatively heavy functional analytic machinery at a stage where the benefit is still unclear.

As it turns out, the extended framework is also very well suited for developing a numerical method based on pseudospectral approximation. So the present section serves at the same time as an introduction to our description of this method in Section~\ref{s:pseudospectral}.

\bigskip
An element $\phi$ of $L^1$ can be represented by a primitive $\Phi$, normalized to be zero in a reference point. Let us focus on \eqref{X}, and define
\begin{equation}\label{Phi}
\Phi(\theta) = - \int_{\theta}^0 \phi(\sigma) \dd\sigma.
\end{equation}

Then $\Phi$ is an element of the space of \emph{normalized bounded variation} functions, $NBV[-a_{max},0]$, where the normalization includes that $\Phi(0) = 0$ and that $\Phi$ is continuous from the right. (Incidentally, we will be a bit inconsistent when it comes to specifying the domain of the functions in the notation of a function space. Sometimes we will, sometimes we will not, the latter in particular if we feel that the context should leave little doubt about it. The range is by default $\mathbb{R}$ or $\mathbb{C}$, but please note that for systems of equations the theory works just as well.)

Please note that there exists a one-to-one correspondence between the space $NBV([-a_{max},0])$ and the space $M$ of Borel measures on the interval $[-a_{max},0]$, cf.~for instance \citet[Appendix~I]{Diekmann1995Delay}. Also, recall that a function is called absolutely continuous exactly when it is the primitive of an $L^1$ element. We define
\begin{equation*}
AC_0 := \{\Phi \colon \Phi \text{ is absolutely continuous and } \Phi(0)=0\},
\end{equation*}
and turn this space into a Banach space by defining the norm of $\Phi$ as the $L^1$-norm of its derivative, which is, in fact, the (total variation) norm of $\Phi$ as an element of $NBV$. Then, \eqref{Phi} shows that we may represent $L^1$ by~$AC_0$.

Do we gain anything by using this representation? Yes! 
When we consider $AC_0$ as a closed subspace of $NBV$, we have this bigger enveloping space at our disposal and, importantly, this bigger space is a dual space that comes naturally equipped with a second topology, the weak* topology (if these words frighten you, don’t worry and keep reading, we shall be more explicit in less technical terms soon).

We now assume that $k$ is a bounded measurable function on $\mathbb{R}_+$ with support belonging to $[0,a_{max}]$. This allows us to rewrite \eqref{f} in the form
\begin{equation}\label{f-Phi} 
f(t) = \int_{t - a_{max}}^0 k(t - \theta) \Phi(\dd\theta),
\end{equation}
where we use the Stieltjes integral notation and, also, adopt the convention that the integral equals zero when the lower integration boundary exceeds the upper integration boundary. We refer to \citet[Appendix~A]{Diekmann2021Twin} for a summary of relevant integration theory, with references to various books for precise proofs. Note that \eqref{f-Phi} makes sense for any $\Phi$ in $NBV$, not just for $\Phi$ in $AC_0$. Once $f$ is defined, we define $b$ for $t > 0$ by \eqref{RE-conv} or, in other symbols, by \eqref{b-solution}. Next, we define $B$ by
\begin{equation}\label{B-integrated}
B(t) := \int_0^t b(\sigma) \dd\sigma
\end{equation}
for $t > 0$, $B(0) = 0$ and $B(t) = \Phi(t)$ for $t < 0$.

If we use \eqref{Phi} to `translate' \eqref{T-semigroup} from $L^1$ to $NBV$, we obtain a semigroup $\{\widetilde{T}(t)\}_{t\geq 0}$ of bounded linear operators on $NBV$ defined by
\begin{equation} \label{Ttilde}
(\widetilde{T}(t) \Phi)(\theta) =  B(t+\theta) - B(t).
\end{equation}

This semigroup is \emph{not} strongly continuous (see below). Its restriction to $AC_0$ is strongly continuous. In fact, $AC_0$ is exactly the subspace consisting of elements that are starting point of a continuous orbit. 

Because of the normalization, the (Heaviside) function
\begin{equation}\label{heaviside}
H(\theta) := 
\begin{cases}
    - 1 & \text{for } \theta < 0, \\
    0 & \text{otherwise,}
\end{cases}
\end{equation}
has, when considered as an element of $NBV$, a jump of size one in $\theta = 0$. So $H$ corresponds via integration to the unit Dirac measure concentrated in $\theta=0$. The orbit starting at $H$ is \emph{not} continuous: the total variation distance between $H$ and an arbitrarily small translate of $H$ is two.

When we choose $\Phi = H$ in \eqref{f-Phi}, we find that $f=k$ and hence $b=\res$. So the resolvent $\res$ describes the population birth rate when, at time zero, we introduce a cohort (of unit size) of newborn individuals into an `empty' population.

If we formally calculate the limit, for $t$ tending to zero, of $\frac{1}{t}( \widetilde{T}(t) \Phi  -  \Phi )$, pointwise in $\theta$ for $\theta < 0$, we obtain $\Phi'(\theta) - b(0)$, where $b(0) = f(0)$ with~$f$ given by \eqref{f-Phi}. This suggests that the semigroup $\{\widetilde{T}(t)\}_{t\geq 0}$ is, in some appropriate sense, generated by an operator $C$ that is the sum of a differentiation operator $C_0$ and an operator with one-dimensional range, spanned by $H$, and as `coefficient' the map $\Phi \mapsto f(0)$. Sun-star calculus yields a rigorous proof that this is correct, cf.~\citet[Section~3.2]{Diekmann2008Suns}. Alternatively, one can use the framework of twin semigroups, see \citet{Diekmann2021Twin}. In both approaches, a key point is that on $NBV$ we have, in addition to the norm topology, a second (and weaker) topology. This topology is used when computing the generator and, even more importantly, when defining the integral in the variation-of-constants formula (recall that $\widetilde{T}(t)$ is not strongly continuous, preventing the combined use of the norm and the Riemann integral).

\begin{exercise}\label{ex:PDE-integrated}
Define $N(t,a)$ as in \eqref{Ncumulative}.
Integrate the PDE \eqref{PDE} with respect to $a$ in order to reformulate, using also \eqref{BC-PDE}, the equation as a PDE for $N$. Interpret this PDE as the abstract ODE
\begin{equation*}
\frac{\dd N}{\dd t} = A_0 N + B N,
\end{equation*}
where $A_0$ is an unbounded operator with action 
\begin{equation*}
    (A_0 \Psi)(a) := - \Psi'(a) - \int_0^a \mu(\alpha) \Psi(\dd\alpha),
\end{equation*}
while $B$ is the bounded operator
\begin{equation*}
    B \Psi := \int_0^{\infty} \beta(\sigma) \Psi(\dd\sigma) \, \widetilde{H}
\end{equation*}
with $\widetilde{H}$ the Heaviside function with $\widetilde{H}(0)=0$ and $\widetilde{H}(a) = 1$ for $a > 0$.
\end{exercise}

We end this section by explaining informally how the idea of an extended state space (cf.~\citet[Section~1.4.2]{MagalRuan2018book} and \citet[Section~6.4.2]{Inaba2017book}) relates to the approach described above.

A key mathematical difficulty is the following. The operator describing reproduction has one-dimensional (for scalar problems; in case of systems, the dimension is equal to the size of the system) range. However, the range is not contained in $X=L^1$. So we need to `enlarge' the state space $X$. When we do this in a minimal manner, we go from $X$ to $\mathbb{R}\times X$. In the context of age-structured models, the $\mathbb{R}$ component specifies the size of a Dirac measure (represented by a Heaviside function, if one works with $NBV$) in, respectively, $a=0$ or $\theta=0$. The original state space $X$ is identified with $\{0\}\times X$ and $\mathbb{R}\times X$ only serves as an auxiliary space, allowing to formulate suitable variation-of-constants formulae. A complicating factor is that the translate of the Dirac measure in a point is a Dirac measure in another point. As a consequence, translation does not map $\mathbb{R}\times X$ into itself. This difficulty is overcome by integration, more precisely, by working with integrated semigroups, cf.~\citet[Chapter~3]{MagalRuan2018book} and \citet{Thieme1990integrated}. 

So, the ideas underlying the `extended state space' approach are much the same as the ideas explained above that lead to the sun-star calculus summarized in Appendix~\ref{s:sun-star}. But one does not `enlarge' the state space quite as much and one introduces integrated semigroups in order to have a variation-of-constants formula. On the `bigger' sun-star space $M\equiv NBV$, the integrated semigroup is in fact the integral of a semigroup! This semigroup is not strongly continuous, precluding the use of the Riemann integral in the variation-of-constants formula. Since the sun-star space is, by definition, a dual space, the weak* integral comes to our rescue.

\CCLsection[Nonlinear RE and Nicholson's Blowfly Model]{Nonlinear Renewal Equations and Nicholson's Blowfly Model as a Worked Example}
\label{s:nicholson}

Gurney, Blythe and Nisbet \citep{Gurney1980} found good quantitative agreement between data from Nicholson's classic blowfly experiments and solutions of a certain delay differential equation. Their work shows that some form of density dependence can lead to narrow `discrete' generations in cycling populations. 
Here, we first formulate the underlying model as a nonlinear RE. Next, we perform a linearized stability analysis (involving the roots of a characteristic equation) of the unique trivial steady state. In the next section, we shall follow up on this with a numerical bifurcation analysis featuring periodic solutions. 

Consider a population of blowflies divided into larvae (i.e., non-\-re\-pro\-duc\-ing juvenile individuals) and adult (reproducing) individuals. Larvae become adult after a fixed maturation delay $\tau>0$, and all individuals experience mortality described by a survival probability $\F(a)$. 
Each adult individual has a reproductive potential $\beta(a)$ for $a>\tau$, and a maintenance cost proportional to it with proportionality constant $\gamma>0$. 
Scramble competition of adults limits the food supply and reduces the reproductive rate (as resources are allocated to maintenance first), so the total egg production is assumed to be a nonlinear function $h$ of the total adult reproductive potential, with the property that it tends to zero as the number of adults increases. 


In particular we define 
\begin{equation*}
    h(x) = x \ee^{-\gamma x},
\end{equation*}
and assume that, at the $p$-level ($p$ for `population'), the egg production rate is given by
\begin{equation}\label{Nicholson1}
    b(t) = h(\tilde{b}(t)), \qquad t>0, 
\end{equation}
where
\begin{equation}\label{Nicholson2}
    \tilde{b}(t) = \int_\tau^{a_{max}} \beta(a) \F(a) b(t-a) \dd a.
\end{equation}
In contrast to the framework described in Section~\ref{s:constant_env}, the environmental condition is determined by the adult population maintenance costs, and hence is not constant: \eqref{Nicholson1} with \eqref{Nicholson2} is a \emph{nonlinear} RE.

Nontrivial equilibria of \eqref{Nicholson1} are constant functions $b(t) \equiv b^*>0$ such that 
\begin{equation*}
    1 = R_0 \, \ee^{-\gamma R_0 b^* }, 
\end{equation*}
where the basic reproduction number $R_0$ is defined by 
\begin{equation*}
    R_0 = \int_\tau^{a_{max}} \beta(a) \F(a) \dd a.
\end{equation*}
Note that $R_0$ specifies the expected total lifetime number of offspring of a newborn individual in the low density limit, i.e., when the influence of density dependence can be ignored. 
Hence, a unique nontrivial equilibrium exists when $R_0>1$. 

For a general nonlinear RE 
\begin{equation}\label{RE-nonlinear}
    b(t) = F(b_t), \qquad t>0,
\end{equation}
where $F \colon X \to \mathbb{R}$ is at least continuously differentiable and $X$ is defined in \eqref{X}, the \emph{principle of linearized stability} holds, ensuring that the local stability of an equilibrium $b^*$ is determined by the stability of the zero solution of the linearized equation 
\begin{equation}\label{RE-linearized}
    x(t) = DF(b^*) \, x_t, \qquad t>0, 
\end{equation}
where $DF(b^*)$ is the Fr\'echet derivative of $F$ computed at $b^*$ \citep{Diekmann2008Suns}. Note that by the Riesz representation theorem, \eqref{RE-linearized} can be written as the RE
\begin{equation*}
    x(t) = \int_0^\infty x(t-a) k(a) \dd a
\end{equation*}
for a suitable kernel $k \in L^\infty$ with compact support. Note that $k$ may take negative values.  
The stability of the zero solution of \eqref{RE-linearized} is then determined by the roots of the Euler--Lotka equation \eqref{CE} as described in Section \ref{s:asymptotic}:\\
\emph{(i)} if all roots have strictly negative real part, $b^*$ is asymptotically stable;\\
\emph{(ii)} if at least one root has strictly positive real part, $b^*$ is unstable. 

A special case is obtained by taking the limit that reproduction occurs by way of a burst upon reaching adulthood, i.e., in the limit $a_{max} \searrow \tau$ and $\beta \rightarrow \infty$, with $\int_\tau^{a_{max}} \beta(a)\F(a)\dd a$ converging to a finite positive value, leading to a neutral RE. 
For a theoretical framework to deal with neutral equations we refer to \citet{Diekmann2021Twin}, but note that this paper only deals with linear equations. 

If $k$ does not have compact support, the principle of linearized stability holds in an exponentially weighted $L^1$ state space, and the stability of the zero solution of the linearized equation is determined by the characteristic roots in a right-half complex plane ($\Re\lambda > -r_c$ for some $r_c > 0$) where the Laplace transform $\overline{k}(\lambda)$ is well defined \citep{Diekmann2012Blending}. 

\begin{exercise}
Show that the linearization of \eqref{Nicholson1} around the nontrivial equilibrium $b^*>0$ reads
\begin{equation*}
    x(t) = \frac{\left( 1 - \ln R_0 \right)}{R_0} \int_\tau^{a_{max}} \beta(a) \F(a) x(t-a) \dd a,
\end{equation*}
and the corresponding characteristic equation for $\lambda \in \mathbb{C}$ is 
\begin{equation}\label{CE-Nicholson}
    1 = \frac{1 - \ln R_0}{R_0} \int_\tau^{a_{max}} \beta(a) \F(a) \ee^{-\lambda a} \dd a.
\end{equation}
\end{exercise}

\begin{exercise}
Consider the case $a_{max}=\infty$, $\beta(a) \equiv \beta$ when $a>\tau$, and $\F(a) = \ee^{-\mu a}$ for $\mu>0$.
\begin{enumerate}[label=(\roman*),noitemsep,nolistsep]
\item Convince yourself that the total number of adults $N(t)$ is given by
\begin{equation*}
    N(t) = \int_\tau^\infty \ee^{-\mu a} b(t-a) \dd a,
\end{equation*}
then show that $N(t)$ satisfies the delay differential equation (DDE) considered by \citet{Gurney1980}, 
\begin{equation}\label{Nicholson-DDE}
    \frac{\dd N(t)}{\dd t} = - \mu N(t) + \beta \ee^{-\mu\tau} h(N(t-\tau)).
\end{equation}

\item Show that \eqref{CE-Nicholson} simplifies to
$$ \lambda - b_1 - b_2 \ee^{-\lambda} = 0,$$
with
$$ b_1 =  -\mu + \frac{\mu(1-\ln R_0)}{1-\ee^{-\mu}} , \qquad b_2 = \frac{\mu \ee^{-\mu} (1-\ln R_0)}{1-\ee^{-\mu}}.$$
See also \citet[Section~2]{DeWolff2021}. 
\end{enumerate}
\end{exercise}

\begin{exercise}
In the case of burst reproduction, show that \eqref{CE-Nicholson} simplifies to
$$ 1 = (1-\ln R_0) \, \ee^{-\lambda \tau},$$
and verify that, for $R_0=1$, there are countably many roots on the imaginary axis (forming a group under addition). 
\end{exercise}

Sometimes, for instance when $a_{max}=\infty$ or $a_{max} \searrow \tau$, one can describe analytically how the parameter space partitions in a stability and an instability region \citep{Diekmann2013didactical}. The boundary between these regions has, in general, two types of components: zero eigenvalue (the multiplicity of steady states changes) or a pair of imaginary eigenvalues (corresponding to a Hopf bifurcation). 
We refer to \citet[Appendix~A]{DeWolff2021} for the analytical study of stability of~\eqref{Nicholson-DDE}. 

When the analytical description is, for whatever reason, not possible, one can still try to compute the stability boundary in parameter space by numerical continuation: we refer for instance to the software package MatCont for the numerical continuation of curves in MATLAB \citep{MatCont2008}.

Studying analytically the dynamical and bifurcation properties far from the Hopf bifurcation curve (e.g., periodic solutions and their stability) is usually impossible, so numerical bifurcation tools become of crucial importance.
The lack of ready-to-use bifurcation software has most likely hampered the use of RE models in applications. 

In the next section, we introduce the pseudospectral method to study numerically the bifurcations of general nonlinear equations \eqref{RE-nonlinear} by means of approximating systems of ODE, for which a much larger pool of established software packages exists, and we illustrate its potential by analyzing the Nicholson blowfly model. 

For bifurcation theory for DDE in general we refer to the book by \cite{GuoWu2013}.

\CCLsection{Numerical Bifurcation Analysis via Pseudospectral Approximation}
\label{s:pseudospectral}

With the nonlinear RE \eqref{RE-nonlinear} we have associated a dynamical system on the infinite-dimensional function space $X$. We now want to approximate elements of $X$ by elements of a finite-dimensional subspace, and next derive a system of ODE to describe the dynamics of the approximation. 
This reduction can be done in various ways. 
We here focus on the spectral collocation method, also called \emph{pseudospectral method}, see for instance \citet{BMVbook} and \citet{Ando202215years}. 

We summarize the approach for nonlinear RE adopted by \citet{Scarabel2021JCAM}, to which we refer for further details. It considers the integrated states~\eqref{Phi}. For simplicity we present the method in the scalar case, although it can be easily applied to systems of equations. 
Besides the already mentioned advantage that the rule for extension appears as a bounded perturbation of the infinitesimal generator corresponding to trivial extension, working with the integrated state allows us to work with absolutely continuous functions, meaning that point evaluation and polynomial interpolation are well defined. 

We start from a general nonlinear RE~\eqref{RE-nonlinear}.
Through the integrated variable~\eqref{B-integrated}, we can define the nonlinear analogue of~\eqref{Ttilde} and consider the state $v(t)$ such that 
\begin{equation*}
    v(t)(\theta) = B(t+\theta)-B(t).
\end{equation*}
As described at the end of Section~\ref{s:widening}, $v(t)$ satisfies $v(t)(0)=0$ and the abstract differential equation
\begin{equation}\label{ADE-NBV}
    \frac{\dd v(t)}{\dd t} = C_0 v(t) + F(C_0 v(t)) \cdot H,
\end{equation}
where $C_0$ is the differentiation operator $C_0 \Phi = \Phi'$, for $\Phi\in AC_0$, and $H$ is the Heaviside function~\eqref{heaviside}. 
\begin{exercise}
    Derive \eqref{ADE-NBV} by combining \eqref{B-integrated} and \eqref{RE-nonlinear}.
\end{exercise}

In the pseudospectral approach, $v(t) \in AC_0$ is approximated by a polynomial in~$\theta$, such that this polynomial satisfies \eqref{ADE-NBV} on a finite set of points in $[-a_{max},0]$, called \emph{collocation nodes}. This will lead to a finite-dimensional system of ODE in which each variable describes the time-evolution of the value of the approximating polynomial in a node. 

\begin{figure}[tb]
    \centering

%
%
\definecolor{mycolor1}{rgb}{0.00000,0.44700,0.74100}%
\begin{tikzpicture}

\small 

\begin{axis}[%
axis y line=right,
axis x line* = left,
width=3.5in,
height=2in,
at={(0,0)},
xmin=-1,
xmax=0,
ymin=-5.66213785987333e-15,
ymax=2,
xtick={0,
-0.05,
-0.25,
-0.5,
-0.75,
-0.95,
-1},
xticklabels={$0$,$\theta_1$,$\theta_2$, , $\theta_{M-1}$,\hspace{15pt} $\theta_M$,{\hspace{-20pt} $-a_{max}$}},
ytick=\empty,
xmajorgrids,
axis background/.style={fill=white}
]
\addplot [color=black, only marks, mark=*, mark options={solid, black}, forget plot]
  table[row sep=crcr]{%
0	0\\
-0.05	0\\
-0.25	0\\
-0.5	0\\
-0.75	0\\
-0.95	0\\
};

\node[anchor = north east, align=left, inner sep=0mm, text=black]
at (axis cs:-0.05,0.52,0) {$y_1(t)$};
\node[anchor = south east, align=left, inner sep=0mm, text=black]
at (axis cs:-0.25,0.7,0) {$y_2(t)$};
\node[anchor = south west, align=left, inner sep=0mm, text=black]
at (axis cs:-0.75,1.15,0) {$y_{M-1}(t)$};
\node[anchor = south west, align=left, inner sep=0mm, text=black]
at (axis cs:-0.95,1.4,0) {$y_M(t)$};

\node[anchor = south, align=left, inner sep=0mm, text=black]
at (axis cs:-0.15,1.5,0) {$p$};

\draw [->,thick,color=black] (axis cs:-0.15,1.45,0) -- (axis cs:-0.12,0.8);

\addplot [color=black, only marks, mark=*, mark options={solid, black}, forget plot]
  table[row sep=crcr]{%
0	0\\
-0.05	0.525635548188012\\
-0.25	0.642012708343871\\
-0.5	0.824360635350064\\
-0.75	1.05850000830634\\
-0.95	1.29285482965792\\
};

\addplot [color=black, line width=1.0pt, forget plot]
  table[row sep=crcr]{%
-1	1.85913576856716\\
-0.99	1.71300745237959\\
-0.98	1.58469062665893\\
-0.97	1.47277765704871\\
-0.96	1.37592559486629\\
-0.95	1.29285482965792\\
-0.94	1.22234774175379\\
-0.93	1.16324735482309\\
-0.92	1.11445598842909\\
-0.91	1.07493391058416\\
-0.9	1.04369799030489\\
-0.89	1.01982035016708\\
-0.88	1.00242701886085\\
-0.87	0.9906965837457\\
-0.86	0.98385884340553\\
-0.85	0.981193460203739\\
-0.84	0.982028612838271\\
-0.83	0.985739648896672\\
-0.82	0.991747737411156\\
-0.81	0.999518521413663\\
-0.8	1.00856077049092\\
-0.79	1.01842503333949\\
-0.78	1.02870229032085\\
-0.77	1.03902260601646\\
-0.76	1.04905378178277\\
-0.75	1.05850000830634\\
-0.74	1.06710051815887\\
-0.73	1.07462823835227\\
-0.72	1.08088844289371\\
-0.71	1.0857174053407\\
-0.7	1.08898105135611\\
-0.69	1.09057361126329\\
-0.68	1.09041627260107\\
-0.67	1.08845583267885\\
-0.66	1.08466335113168\\
-0.65	1.07903280247527\\
-0.64	1.07157972866108\\
-0.63	1.0623398916314\\
-0.62	1.05136792587436\\
-0.61	1.03873599097903\\
-0.6	1.02453242419046\\
-0.59	1.00886039296476\\
-0.58	0.991836547524141\\
-0.57	0.973589673411972\\
-0.56	0.954259344047862\\
-0.55	0.933994573282701\\
-0.54	0.91295246795373\\
-0.53	0.891296880439597\\
-0.52	0.869197061215415\\
-0.51	0.846826311407827\\
-0.5	0.824360635350063\\
-0.49	0.801977393137005\\
-0.48	0.779853953180236\\
-0.47	0.758166344763112\\
-0.46	0.737087910595815\\
-0.45	0.716787959370417\\
-0.44	0.697430418315937\\
-0.43	0.679172485753402\\
-0.42	0.662163283650908\\
-0.41	0.64654251017868\\
-0.4	0.632439092264132\\
-0.39	0.619969838146925\\
-0.38	0.609238089934029\\
-0.37	0.600332376154784\\
-0.36	0.593325064315958\\
-0.35	0.588271013456808\\
-0.34	0.585206226704139\\
-0.33	0.584146503827368\\
-0.32	0.585086093793578\\
-0.31	0.587996347322583\\
-0.3	0.592824369441985\\
-0.29	0.599491672042237\\
-0.28	0.6078928264317\\
-0.27	0.617894115891705\\
-0.26	0.629332188231611\\
-0.25	0.642012708343871\\
-0.24	0.655709010759082\\
-0.23	0.670160752201055\\
-0.22	0.68507256414187\\
-0.21	0.700112705356934\\
-0.2	0.714911714480049\\
-0.19	0.729061062558462\\
-0.18	0.742111805607934\\
-0.17	0.753573237167793\\
-0.16	0.762911540856001\\
-0.15	0.769548442924206\\
-0.14	0.77285986481281\\
-0.13	0.772174575706023\\
-0.12	0.766772845086927\\
-0.11	0.755885095292533\\
-0.1	0.738690554068846\\
-0.09	0.714315907125919\\
-0.08	0.681833950692915\\
-0.07	0.640262244073169\\
-0.06	0.58856176219925\\
-0.05	0.525635548188012\\
-0.04	0.450327365895665\\
-0.03	0.361420352472828\\
-0.02	0.257635670919592\\
-0.01	0.137631162640578\\
0	-3.26249847394851e-15\\
};

\end{axis}
\end{tikzpicture}%

    \caption{Illustration of the polynomial interpolation process described by~\eqref{pM}.}
    \label{fig:interpolation}
\end{figure}
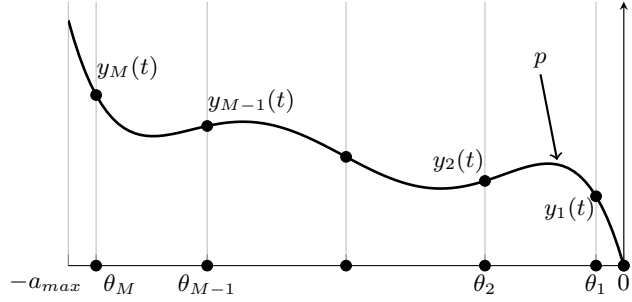

\bigskip
Given an integer $M>0$, the \emph{discretization index}, we consider a set of $M$ distinct points $\{\theta_1,\dots,\theta_M\}\subset [-a_{max},0)$, ordered such that
$$ -a_{max} \leq \theta_M < \theta_{M-1} < \cdots < \theta_1 < 0.$$
We introduce $M$ variables $y_j(t)$, $j=1,\dots,M$, and, for each $t\geq 0$, the $M$-degree polynomial $p$ such that
$$ p(0) = 0 \qquad \text{and} \qquad p(\theta_j) = y_j, \quad j=1,\dots,M.$$
Note that $p$ depends on time~$t$, since $y_j$ does. 
According to standard approximation theory, $p$ can be written in the Lagrange representation as a linear combination of the $y_j$,
\begin{equation}\label{pM}
    p(\theta) = \sum_{j=1}^M y_j \ell_j(\theta),
\end{equation}
where the Lagrange polynomials, $\ell_j$, are defined as
\begin{equation*}
    \ell_j(\theta) = \prod_{\substack{i=0\\ i\neq j}}^M \frac{\theta-\theta_i}{\theta_j-\theta_i}, \qquad j=0,\dots,M,
\end{equation*}
and form a basis of the space of the $M$-degree polynomials \citep[Section~2.5]{Davis1975book}. 

Next we require that the polynomial~\eqref{pM} satisfies~\eqref{ADE-NBV} on the collocation nodes, i.e., for $\theta = \theta_1,\dots,\theta_M$. 
By linearity,
$$ \frac{\partial p}{\partial t} = \sum_{j=1}^M \frac{\dd y_j}{\dd t} \ell_j, \qquad \text{and} \qquad  \frac{\partial p}{\partial \theta} = \sum_{j=1}^M y_j \frac{ \dd\ell_j}{\dd \theta}.$$
Hence, using the property $\ell_j(\theta_i) = \delta_{ij}$, where $\delta_{ij}$ is Kronecker's delta, we obtain the $M$-dimensional system of ODE
\begin{equation}\label{ODE}
    \frac{\dd y_i}{\dd t} = \sum_{j=1}^M y_j \ell_j'(\theta_i) - F(\sum_{j=1}^M y_j \ell_j'), \qquad i=1,\dots,M. 
\end{equation}
Given the solution $y_1,\dots,y_M$ of \eqref{ODE}, an approximation of the solution $b(t)$ of \eqref{RE-nonlinear} is given by $F(\sum_{j=1}^M y_j \ell_j')$. 

System \eqref{ODE} can be used to study the stability of equilibria of \eqref{RE-nonlinear} and their bifurcations in the sense specified by \citet{Scarabel2021JCAM}. In particular: the equilibria of \eqref{RE-nonlinear} and \eqref{ODE} are in one-to-one correspondence; linearization and pseudospectral approximation commute; and a given characteristic root of the linearized RE is approximated by a sequence of characteristic roots of the approximating system as $M\to \infty$, when the collocation nodes are taken as the Chebyshev zeros, i.e., the roots of the Chebyshev polynomials of the first kind \citep[Section~1.5]{Gautschi2004book}.
Since the eigenfunctions of the linearized operator are exponentials, the order of convergence to the eigenvalues (and corresponding eigenfunctions) is exponential in $M$, a behavior known as \emph{spectral accuracy} \citep[Chapter~4]{Trefethen2000spectral}\footnote{The word `spectral' here does not relate to the spectrum of an operator!}.
The use of Chebyshev nodes also allows to have explicit formulas for the elements $\ell_j'(\theta_i)$ of the differentiation matrix and for the corresponding quadrature rules, which can be computed efficiently \citep[Section~3.1]{Gautschi2004book}. 
Note also that the structure of the ODE system \eqref{ODE} reflects the structure of the abstract differential equation \eqref{ADE-NBV}, with a linear differentiation part and a perturbation with a one-dimensional range spanned by the vector $(-1,\dots,-1)^T$ that approximates the Heaviside function \eqref{heaviside}. 

A similar approach can be used to study DDE with bounded delay \citep{Breda2016Prospects}, RE and DDE with infinite delay \citep{Gyllenberg2018, Scarabel2024Infinite}, and the PDE formulation introduced in Section~\ref{s:asymptotic} \citep{Scarabel2021Vietnam}, for which we refer also to \cite{DiekmannScarabelSize}. The method has also been applied by \cite{Ando2022pseudospectral} to study the stability of linear models with two structuring variables.

In the presence of discontinuities in the model coefficients, for instance when the fertility rates have jumps at the maturation age, a piecewise approach is more powerful. The function is approximated with a polynomial in each sub-interval (potentially with varying degree, depending on the length of each sub-interval), see for instance \citet[Section 5.2]{BMVbook}. 

The numerical approximation of DDE and RE with bounded delay has recently been made available via an import utility in the graphical user interface of MatCont (version 7p6). 
The standard graphical user interface of MatCont can then be used for the numerical bifurcation analysis of the approximating ODE system obtained via pseudospectral approximation \citep{Liessi-matcont}. 
The authors and their collaborators hope that the availability of easy-to-use software will promote a wider use of age-structured models in applications.

\bigskip

\begin{figure}[pt]
    \centering
    \input{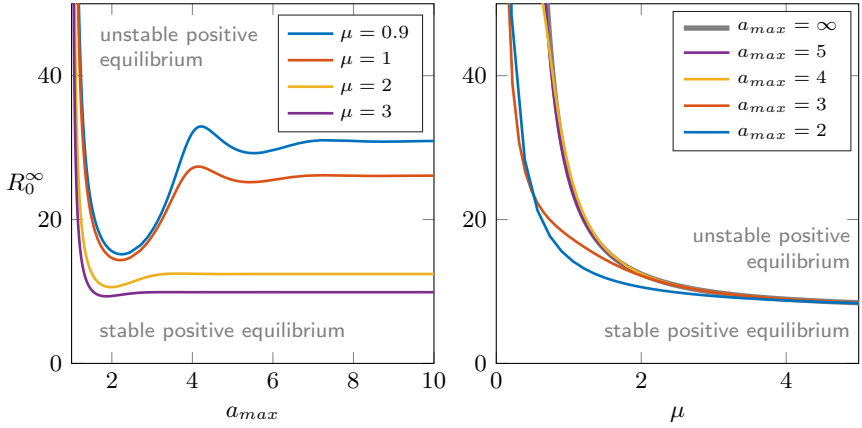}
    \caption{Stability boundaries of the positive equilibrium of \eqref{Nicholson-scaled}, for different values of $\mu$ and $a_{max}$: the equilibrium exists for $R_0>1$ (i.e., $R_0^\infty > 1+\frac{\beta}{\mu}\ee^{-\mu a_{max}}$) and is stable below the curves. The stability boundary for $a_{max}=\infty$ is computed analytically.}
    \label{fig:nicholson}
\end{figure}

\begin{figure}[pt]
    \centering
    \input{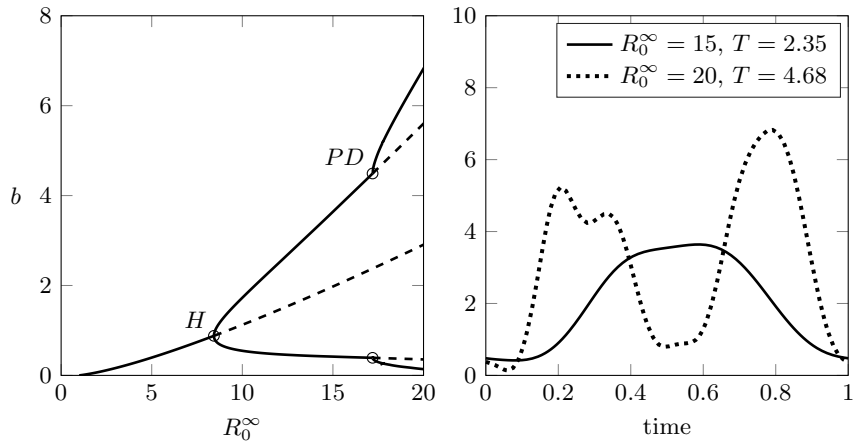}
    \caption{Left: bifurcation diagram of \eqref{Nicholson-scaled}, with $a_{max}=\mu=5$ and varying $R_0^\infty$, showing a stable branch of periodic orbits (minimum and maximum values plotted) starting from Hopf (H), which becomes unstable through a period doubling (PD) bifurcation. Right: stable periodic orbits corresponding to $R_0^\infty=15$ (before PD) and $R_0^\infty=20$ (after PD). The periods are normalized to 1, and the estimated period $T$ is shown in the legend.}
    \label{fig:nicholson-2}
\end{figure}

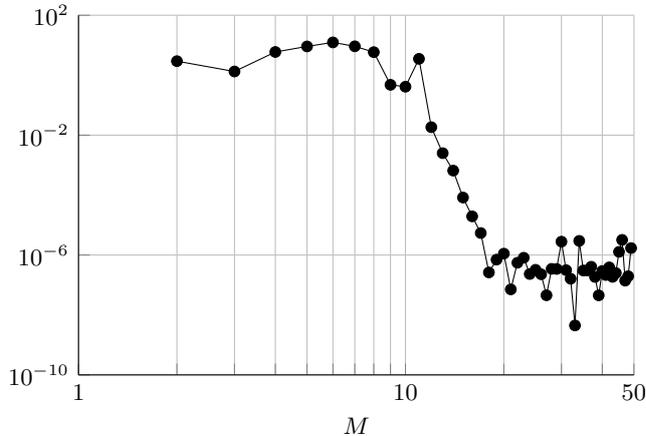
\begin{figure}[t]
    \centering
%
%
%
\definecolor{mycolor1}{rgb}{0.00000,0.44700,0.74100}%
\begin{tikzpicture}

\begin{axis}[%
axis y line*=left,
axis x line*=left,
width=3.5in,
height=2.5in,
at={(0,0)},
xmode=log,
xmin=1,
xmax=50,
xminorticks=true,
ymode=log,
ymin=1e-10,
ymax=100,
yminorticks=true,
axis background/.style={fill=white},
xtick={1,2,3,4,5,6,7,8,9,10,20,30,40,50},
xticklabels={1,,,,,,,,,10,,,,50},
xticklabel style={font=\small},
yticklabel style={font=\small},
xlabel style={font=\small},
xlabel=$M$,
grid=both,
]
\addplot [color=black, mark=*, mark options={solid, black}, forget plot]
  table[row sep=crcr]{%
1	nan\\
2	2.94129858516234\\
3	1.32965979552885\\
4	5.95592905006689\\
5	9.12092461410465\\
6	12.4235120806191\\
7	9.20866304753386\\
8	5.87727808969711\\
9	0.477693389067397\\
10	0.412772507520373\\
11	3.51427425169077\\
12	0.0184154997472277\\
13	0.00250361032962587\\
14	0.000655821965658276\\
15	8.28773437575592e-05\\
16	1.94857168835938e-05\\
17	5.41769635020728e-06\\
18	2.6226560834175e-07\\
19	7.03046985961464e-07\\
20	1.11131584645818e-06\\
21	7.13729662038531e-08\\
22	5.52236826933949e-07\\
23	8.11235430830948e-07\\
24	2.3236331969656e-07\\
25	3.21453732965438e-07\\
26	2.27929270124605e-07\\
27	4.54158382012793e-08\\
28	3.45708379256848e-07\\
29	3.42921765650317e-07\\
30	2.79401596259277e-06\\
31	3.12903395638386e-07\\
32	1.61802020670621e-07\\
33	4.47140280357416e-09\\
34	2.96679453981596e-06\\
35	3.04563563702231e-07\\
36	3.06031580521449e-07\\
37	4.06077163006557e-07\\
38	1.8551720160076e-07\\
39	4.50988650868567e-08\\
40	2.93550563412737e-07\\
41	2.15205595566204e-07\\
42	3.87494985432113e-07\\
43	1.8667446965992e-07\\
44	2.50550051816845e-07\\
45	1.27715973974318e-06\\
46	3.2015780391248e-06\\
47	1.39704567914123e-07\\
48	1.96387738071735e-07\\
49	1.70639409446949e-06\\
50	0\\
};
\end{axis}
\end{tikzpicture}%

    \caption{Log-log plot of the error in the detection of the Hopf bifurcation of \eqref{Nicholson-scaled}, with $a_{max}=\mu=5$ and varying $M_a$ (number of points in the discretization of the adult interval), for $M_j=5$ fixed. }
    \label{fig:nicholson-error}
\end{figure}

To illustrate the use of the pseudospectral approximation in applications, we consider again the Nicholson blowfly RE \eqref{Nicholson1} with constant parameters $\beta(a) \equiv \beta$, for $a \geq \tau$, and $\mu(a) \equiv \mu$.
By rescaling $b$ and $t$, without loss of generality we can set $\gamma=\tau=1$ and restrict our attention to the equation
\begin{equation} \label{Nicholson-scaled}
    b(t) = \beta \tilde{b}(t) \ee^{-\tilde{b}(t)}, \qquad \tilde{b}(t) = \int_1^{a_{max}} \ee^{-\mu a} b(t-a) \dd a. 
\end{equation}
The unique positive equilibrium $b^* = \frac{\beta \ln R_0}{R_0}$ exists when $R_0>1$, where $R_0 = \frac{\beta}{\mu}(\ee^{-\mu}-\ee^{-\mu a_{max}})$. 
Note that $R_0 \to R_0^\infty = \frac{\beta}{\mu}\ee^{-\mu}$ when $a_{max}\to \infty$. 

Figure~\ref{fig:nicholson} shows the stability boundaries of the equilibrium~$b^*$ in the $(a_{max},R_0^\infty)$ plane for various values of $\mu$, and in the $(\mu,R_0^\infty)$ plane for various values of $a_{max}$, obtained using MatCont for the pseudospectral approximation of \eqref{Nicholson-scaled}. To treat more efficiently the discontinuity in the fertility rate $\beta$, here we used a piecewise discretization using a polynomial of degree $M_j=5$ in the juvenile age-interval $[0,1]$ and a polynomial of degree $M_a=20$ in the adult age-interval $[1,a_{max}]$. Note how, as $a_{max}\to \infty$, we recover the analytically known stability boundary from \citet{DeWolff2021} in the right panel of Figure~\ref{fig:nicholson}. 

Figure~\ref{fig:nicholson-2} (left) shows the bifurcation diagram for $a_{max}=\mu=5$, varying~$R_0^\infty$. The positive equilibrium undergoes a Hopf bifurcation at $R_0^\infty \approx 8.43$, and a branch of stable periodic solutions is born. The branch of periodic solutions undergoes a period doubling bifurcation at $R_0^\infty \approx 17.19$. Two periodic solutions, one at $R_0^\infty =15$ (before the period doubling point) and one at $R_0^\infty =20$, are plotted in the right panel of Figure~\ref{fig:nicholson-2}. In the figure, the two profiles are plotted in one period, normalized to 1. The estimated period is included in the legend. These periodic orbits show the separation of reproduction into different `generations', with large peaks that can also differ in size, as first investigated by \cite{Gurney1980}. 

Figure~\ref{fig:nicholson-error} shows the behavior of the approximation error of the Hopf point, when the number of discretization points in the juvenile interval is fixed at $M_j=5$ and the number of points in the adult interval is varied from $M_a=1$ (no Hopf point detected) to $M_a=50$. The figure shows that $M_a=20$ is sufficient to approximate the Hopf bifurcation point to an accuracy of $10^{-6}$, which is the default tolerance in MatCont. The reference value for the computation of the error is taken as the value computed with $M_a=50$. The convergence behavior agrees with the theoretical results concerning the approximation error of the characteristic roots proved by \cite{Scarabel2021JCAM}.

\CCLsection{Concluding Remarks}\label{s:conclusions}

The general aim of structured population models is to incorporate the development of individuals as a component of population change, the other components obviously being the birth and death of individuals. The simplest situation arises when we just keep track of `age', i.e., time elapsed since birth. The trivial observation that an individual of age $a$ at time $t$ must have been born at time $t-a$ forms the basis for deriving straightforward bookkeeping equations capturing the population dynamics. It is easy to prove that these equations admit unique solutions once we prescribe appropriate initial conditions. As always, the operators mapping the initial data to the corresponding information at a later time form a dynamical system (in the linear case we usually speak about `semigroups of operators' rather than `dynamical systems').

To develop local stability and bifurcation theory for these dynamical systems is far less easy. The reason is that, in some sense, the ideal state space does not exist. When we choose the `small' space $L^1$, the birth term, being concentrated in a point, is singular. When we work with the `large' space of measures, translation is not strongly continuous. As we indicated in Section~\ref{s:widening}, and shall show in a bit more detail in Appendix~\ref{s:sun-star}, one way out of this difficulty is to work with both. It involves a bit of work and a bit of abstract thinking to set up a framework, but, once this is done, standard proofs can be used to develop the local qualitative theory, allowing to draw conclusions about dynamical behavior from information about the position of the roots of a characteristic equation in the complex plane, and how this position changes when parameters are varied.

For simple models, the stability (of steady states) boundary in parameter space can be determined by a pen and paper analysis. The Hopf bifurcation theorem yields information about periodic solutions close to certain parts of these stability boundaries. But in order to investigate the dynamics beyond a neighborhood of the stability boundary, one needs numerical methods. In this chapter we advocate the use of pseudospectral approximation to derive an $M$-dimensional system of ODEs that is amenable to a numerical bifurcation analysis with standard and (relatively) user-friendly tools. The dimension $M$ is a tunable parameter, controlling the trade-off between accuracy and computational costs. Often one starts with a quick crude analysis for a small value of $M$ and next explores in more detail certain domains in parameter space with higher values of $M$. This works, in a sense, as an interactive version of mesh refinement.

We surmise that often modelers shy away from age-structured models since they are uncertain about both the state of the theoretical art and of the state of the computational art. We very much hope that the present chapter contributes to diminishing this kind of uncertainty.

\appendix

\CCLsection[When Is a RE a System of ODE in Disguise?]{When Is a Renewal Equation a System of ODE in Disguise?
(with Special Attention to the Kermack--McKendrick Epidemic Model)
}
\label{s:ODE}
The assumption that the kernel $k$ in~\eqref{k} has compact support facilitates both the development of theory and the construction of numerical methods for~\eqref{RE}. And, importantly, there are convincing modeling arguments to support this assumption.

The aim of the present section is to reveal that a certain class of kernels with \emph{unbounded} support eases the task of the analyst in a totally different way. Indeed, for this class we derive an equivalent system of ODE, making both the gigantic ODE theory and the very large numerical toolbox available for analysis of the equation. 
The results disclose, in addition, the connection between familiar compartmental epidemic models and the, unfortunately less well-known, Kermack--McKendrick nonlinear RE. 
Here the helpfulness actually also works in the other direction: results that are easy to derive for the scalar RE can be translated into properties of solutions of (possibly high dimensional) systems of ODE.
We point to \citet{Cassidy2022numerical} and \citet{Cassidy2025using} for alternative representations via systems of ODE in the context of differential equations with distributed delays and for efficient numerical methods for time integration. 

The kernels for which this reduction works are matrix exponentials sandwiched between two vectors. 
More precisely, let $d$ be a positive integer, let $U,V$ be two $d$-vectors, $\Sigma$ be a $d\times d$-matrix, and consider the RE \eqref{RE}--\eqref{k} with kernel $k$ given by
\begin{equation} \label{k-exp}
k(a) = U \cdot \ee^{a\Sigma} \, V.
\end{equation}
Note that one can always normalize such that $V$ has $l_1$-norm one, since $V$ and $U$ occur multiplicatively in \eqref{k-exp}. This is convenient to interpret $V$ as a probability vector. 

Substituting \eqref{k-exp} into \eqref{RE}, we find that
\begin{equation}\label{b=Uy}
b(t) = U\cdot y(t),
\end{equation}
with the time-dependent $d$-vector $y$ defined by
\begin{equation}\label{y}
y(t) = \ee^{t\Sigma} \int_{-\infty}^t \ee^{-\theta\Sigma} \, V b(\theta) \dd\theta.
\end{equation}
By differentiating \eqref{y} with respect to $t$, we infer that $y$ satisfies the linear ODE
\begin{equation}\label{ODE-y}
\frac{\dd y}{\dd t} = \Sigma \, y + b \, V = \Sigma y + (U\cdot y) \, V.
\end{equation}

Conversely, a solution of \eqref{ODE-y} defined on $(-\infty,t]$ has, by the variation-of-constants formula, to satisfy the identity \eqref{y} with $b$ given by \eqref{b=Uy}. And by applying $U$ to \eqref{y}, we recover \eqref{RE} with kernel given by \eqref{k-exp}. See \cite{Diekmann2018finite} for general results concerning the equivalence of delay equations and systems of ODE.

One can interpret \eqref{ODE-y} in terms of a Markov process with transition matrix $\Sigma$. The vector $U$ describes the (scalar) output. The vector $V$ spans a one-dimensional subspace and the input is contained in this subspace. In fact, the output is fed back instantaneously as input. 

\bigskip
As a concrete example, consider a situation where newborn individuals are juveniles which have to mature to become adults, i.e., capable of reproduction. Assume that the length of the juvenile period is exponentially distributed with parameter $\sigma$, that all juveniles become adult (so for simplicity we ignore that juveniles can die before reaching adulthood), and that adults die with probability per unit of time $\mu$. Let $y_1$ describe the size of the juvenile subpopulation and $y_2$ the size of the adult subpopulation. For $\Sigma$ we take the $2\times 2$ matrix
\begin{equation*}
\begin{pmatrix}
    - \sigma & 0 \\ 
    +\sigma & - \mu 
\end{pmatrix},
\end{equation*}
for $V$ the 2-vector $(1,0)^T$, and for $U$ the 2-vector $(0,\beta)^T$, with $\beta$ the rate at which an adult produces offspring. The corresponding ODE reads
\begin{equation*}
\frac{\dd}{\dd t}
\begin{pmatrix}
    y_1 \\ y_2
\end{pmatrix}
=
\begin{pmatrix}
    - \sigma y_1   +  \beta y_2 \\
    \sigma y_1   -  \mu y_2
\end{pmatrix}.
\end{equation*}

For life histories with several life stages, the same strategy works. The key point is that sojourn times are exponentially distributed and that the distribution of states-at-birth, as described by the probability vector $V$,
is independent of the state of the mother. 

Incidentally, we refer to \cite{Diekmann2018finite} for a generalization of the equivalence to situations where we deal with a system of RE and~$V$ and~$U$ are matrices, so consist of multiple vectors.

\bigskip
For epidemic models the situation is similar, but with a bookkeeping and a mathematical difference:
\begin{itemize}[noitemsep,nolistsep]
    \item bookkeeping: replace `age' by `age of infection', or, in other words, consider `becoming infected' as the `birth' event, at which the Markov process starts;
    \item mathematical: the linear RE \eqref{RE} describes population growth in a constant environment, but the infection process diminishes the availability of susceptible individuals; taking the point of view of the pathogen, we consider the size $S$ of the subpopulation of susceptible individuals as the (one-dimensional) environmental condition, and this condition is affected by feedback (when we ignore demographic turnover and loss of immunity, this size can only go down).
\end{itemize}

\bigskip
The \citet{KermackMcKendrick1927} model is summarized by the two equations
\begin{equation}\label{dSdt}
\frac{\dd S}{\dd t}  =  - F S
\end{equation}
and
\begin{equation}\label{FOI}
F(t)  =  \int_0^{\infty} A(\tau) F(t-\tau) S(t-\tau) \dd\tau.
\end{equation}
Here $F$ is the \emph{force of infection}, i.e., the probability per unit of time for a susceptible to become infected (these words correspond exactly to what \eqref{dSdt} expresses mathematically). The function $A$ of the time $\tau$ elapsed since infection took place is the one-and-only model ingredient. It specifies the expected contribution of an infected individual to the force of infection. 

Now assume that
\begin{equation*}
A(\tau) = U\cdot \ee^{\tau \Sigma} \, V,
\end{equation*}
then we can rewrite \eqref{FOI} as
\begin{equation}\label{F=Uy}
F = U\cdot y
\end{equation}
with 
\begin{equation*}
y(t) := \ee^{t \Sigma} \int_{-\infty}^t \ee^{- \theta \Sigma} \, V F(\theta) S(\theta) \dd\theta.
\end{equation*}
Accordingly, $y$ satisfies
\begin{equation}\label{ODE-Kermack}
\frac{\dd y}{\dd t}  = \Sigma \, y  +  (F S)\, V,
\end{equation}
and, together, \eqref{dSdt} and \eqref{ODE-Kermack}, with $F$ defined by \eqref{F=Uy}, form a closed nonlinear system of ODE.

\begin{exercise}
Derive, starting from this system of ODE, a scalar nonlinear RE for the cumulative force of infection
\begin{equation*}
    x(t) := \int_{-\infty}^t F(\sigma) \dd\sigma.
\end{equation*}
Hint: first show that \eqref{FOI} holds for $F$ defined by \eqref{F=Uy}. Next integrate the identity \eqref{FOI}, but with $FS$ at the right hand side replaced, on account of \eqref{dSdt}, by $-S'$. Thus you should arrive at the RE
\begin{equation}\label{RE-FOI}
x(t) = \int_0^\infty k(\tau) g(x(t-\tau)) \dd\tau,
\end{equation}
with $k(\tau) := S(-\infty) A(\tau)$ and $g(x) := 1 - \ee^{-x}$. Note that $S(-\infty) = N$, the total population size, if we describe an outbreak in a virgin, i.e., completely susceptible, population.
\end{exercise}

\begin{exercise}
Specify $\Sigma$, $U$ and $V$ for the so-called SEIR (susceptible-exposed-infectious-removed) model.\\
Hint: The capital letters $S$, $E$, $I$, $R$, refer to the possible states of individuals, but are at the same time used to denote the size of the subpopulation of individuals having that state. The vector $y$ corresponds to $(E,I)^T$ (the size $R$ of the subpopulation of removed individuals is ignored in the bookkeeping scheme, since it has no impact on the dynamics). Do you see an analogy with the juvenile-adult population model discussed above? See \citet[Section 9.3]{Diekmann2018finite} for details.
\end{exercise}

For the scalar nonlinear RE \eqref{RE-FOI}, one can rather easily define the basic reproduction number $R_0$ and the Malthusian parameter~$\gr$. In addition, one can characterize the final size by an algebraic equation. These results have immediate implications for the system \eqref{dSdt} and \eqref{ODE-Kermack} with \eqref{F=Uy}, for all positive-off-diagonal matrices $\Sigma$ with negative spectral bound, all non-negative vectors $U$ and all probability vectors~$V$. It appears that the RE is a powerful tool for analysing large classes of epidemic models. We refer to the textbook by \citet{Diekmann2013EpiBook} for more background on epidemic models and to the recent survey papers by \citet{Bootsma2018} and \citet{Diekmann2025} for the infinite-dimensional generalization of the RE~\eqref{RE-FOI} and the notion of \emph{herd immunity threshold}.


\CCLsection{Sun-star calculus}\label{s:sun-star}
In this section we sketch $\odot *$ calculus by focusing on RE with finite delay. 
This is based on \citet{Diekmann2008Suns}. 
For a systematic exposition in the context of DDE we refer to the textbook by \citet{Diekmann1995Delay} and, throughout this section, we include explicit pointers to sections are references in this textbook. 
For the extension to infinite delay, see \citet{Diekmann2012Blending}. 

Let $X=L^1([-1,0],\mathbb{R})$, but note that when doing spectral analysis one should replace $\mathbb{R}$ by $\mathbb{C}$, see Section~III.7. 
We begin by considering the trivial rule for extension, meaning that we extend by zero:
\begin{equation*}
    (T_0(t)\phi)(\theta) = 
    \begin{cases}
        \phi(t+\theta) & \theta\leq -t,\\
        0 & \theta > -t,
    \end{cases}
\end{equation*}
with infinitesimal generator
\begin{equation*}
    A_0\phi=\phi', \qquad D(A_0) = \{ \phi\in AC \colon \phi(0)=0\}.
\end{equation*}
As usual, we represent the dual space by $L^\infty$, but we choose $[0,1]$ rather than $[-1,0]$ as the interval, so we take $X^* = L^\infty([0,1],\mathbb{R})$ with pairing
\begin{equation*}
    \langle g,\phi \rangle = \int_0^1 g(\sigma) \phi(-\sigma) \dd\sigma.
\end{equation*}
The dual semigroup is then 
\begin{equation*}
    (T^*_0(t)g)(\sigma) = 
    \begin{cases}
        g(\sigma+t) & \sigma\leq 1-t,\\
        0 & \sigma > 1-t,
    \end{cases}
\end{equation*}
with 
\begin{equation*}
    A^*_0\,g=g', \qquad D(A^*_0) = \{ g\in Lip \colon g(1)=0\}.
\end{equation*}
Note that $\{T_0(t)\}_{t\geq 0}$ is strongly continuous, but $\{T^*_0(t)\}_{t\geq 0}$ is not (translation is continuous in $L^1$, but not in $L^\infty$, as shown by considering an equivalence class that contains a step function). 

We denote the maximal subspace of strong continuity by $X^\odot$, so 
\begin{equation*}
    X^\odot := \{ g\in X^* \colon \lim_{t\to 0^+} \| T^*_0(t)g-g\|=0 \},
\end{equation*}
and note that in general $X^\odot=\overline{D(A^*_0)}$ (see Proposition~3.8 in Appendix~II).
In this particular case, we have
\begin{equation*}
    X^\odot = C_0([0,1),\mathbb{R})= \{g \in C([0,1),\mathbb{R}) \colon g(1)=0 \}.
\end{equation*}

The restriction $T^\odot_0(t) = T^*_0(t)|_{X^\odot}$ defines a strongly continuous semigroup on $X^\odot$, and its generator is the part of $A^*_0$ in $X^\odot$, i.e., 
\begin{equation*}
    D(A^\odot_0)= \{ g \in D(A^*_0) \colon A^*_0\, g \in X^\odot\}, \quad A^\odot_0 \, g = A^*_0\, g, 
\end{equation*}
see Proposition~3.10 in Appendix~II. 
In this particular case, 
\begin{equation*}
    A^\odot_0 \, g = g', \quad D(A^\odot_0)= \{ g \in C^1([0,1],\mathbb{R}) \colon g(1)=0=g'(1) \}. 
\end{equation*}

We now repeat this procedure of taking first the adjoint, and next a restriction. We represent $X^{\odot*} = NBV_0([-1,0],\mathbb{R})$, where the normalization of an element $\Phi$ means that $\Phi(0)=0$, $\Phi$ is continuous from the right on $[-1,0)$, so $\Phi$ does \emph{not} have a jump in $-1$ (since such a jump is irrelevant because $g(1)=0$), with the pairing being given by 
\begin{equation*}
    \langle \Phi,g \rangle = \int_{-1}^0 g(-\theta) \Phi(\dd \theta). 
\end{equation*}

It follows that 
\begin{equation*}
    (T^{\odot*}_0(t)\Phi)(\theta) = 
    \begin{cases}
        \Phi(t+\theta) & \theta\leq -t,\\
        0 & \theta > -t,
    \end{cases}
\end{equation*}
and
\begin{align*}
    &A^{\odot*}_0\Phi=\Phi', \\
    &D(A^{\odot*}_0) = \{ \Phi \colon \exists \Phi' \in X^{\odot*} \text{ such that } \Phi(\theta) = -\int_\theta^0 \Phi'(\alpha)\dd\alpha \}.
\end{align*}

When taking the closure of $D(A^{\odot*}_0)$ in order to characterize $X^{\odot\odot}$, the condition $\Phi'\in X^{\odot*}$ is weakened to $\Phi'\in L^1$. So 
\begin{equation*}
    X^{\odot\odot} = \{ \Phi \colon \exists \Phi' \in L^1([-1,0],\mathbb{R}) \text{ such that } \Phi(\theta) = -\int_\theta^0 \Phi'(\alpha) \dd \alpha \},
\end{equation*}
and therefore $X^{\odot\odot} \simeq L^1([-1,0],\mathbb{R}) = X$. 

Quite in general, we can define an embedding 
\begin{equation*}
    j \colon X \to X^{\odot*}
\end{equation*}
via 
\begin{equation*}
    \langle jx,x^\odot \rangle := \langle x^\odot, x\rangle. 
\end{equation*}

\begin{definition}
    $X$ is called $\odot$-reflexive with respect to $\{T_0(t)\}_{t\geq 0}$ when $j(X)=X^{\odot\odot}$. 
\end{definition}

\begin{theorem}
    $X$ is $\odot$-reflexive with respect to $\{T_0(t)\}_{t\geq 0}$ if and only if $(\lambda I - A_0)^{-1}$ is weakly compact for $\lambda \in \rho(A_0)$. 
\end{theorem}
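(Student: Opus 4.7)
The plan is to connect the resolvent of $A_0$ on $X$ with the resolvent of its sun-star dual on $X^{\odot*}$, via the key fact that the range of the dual resolvent lies in the sun-strong-continuity subspace $X^{\odot\odot}$, and then close the loop using a sun-star analogue of Gantmacher's theorem characterizing weak compactness.

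The starting point is the intertwining identity
\begin{equation*}
R(\lambda,A_0^{\odot*})\circ j \;=\; j\circ R(\lambda,A_0), \qquad \lambda\in\rho(A_0),
\end{equation*}
obtained by Laplace-transforming $T_0^{\odot*}(t)\,j = j\,T_0(t)$. I would combine this with two standard facts from sun-star calculus (cf.\ Appendix~II of \citet{Diekmann1995Delay}): the range of $R(\lambda,A_0^{\odot*})$ equals $D(A_0^{\odot*})$, which is contained in the maximal subspace of strong continuity $X^{\odot\odot}$; and $X^{\odot\odot}$ is precisely the norm-closure of $D(A_0^{\odot*})$ in $X^{\odot*}$.

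For the forward implication, the hypothesis $j(X)=X^{\odot\odot}$ immediately yields $R(\lambda,A_0^{\odot*})(X^{\odot*})\subseteq j(X)$. I would then invoke the sun-star version of Gantmacher's theorem: a bounded operator $T\colon X\to X$ is weakly compact if and only if $T^{\odot*}(X^{\odot*})\subseteq j(X)$. Applied to $T=R(\lambda,A_0)$, together with the intertwining identity, this gives weak compactness. For the converse, weak compactness combined with the same criterion yields $R(\lambda,A_0^{\odot*})(X^{\odot*})\subseteq j(X)$. Since $j(X)$ is norm-closed (as $j$ is an isometric embedding), passing to the closure in $X^{\odot*}$ and using $X^{\odot\odot}=\overline{D(A_0^{\odot*})}=\overline{\mathrm{range}\,R(\lambda,A_0^{\odot*})}$ delivers $X^{\odot\odot}\subseteq j(X)$, while the reverse inclusion is automatic, so $\odot$-reflexivity holds.

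The main obstacle is the Gantmacher-type criterion in the sun-star setting. Classical Gantmacher characterizes weak compactness of $T\colon X\to X$ by $T^{**}(X^{**})\subseteq j_X(X)$. The inclusion $X^\odot\hookrightarrow X^*$ presents $X^{\odot*}$ as the quotient of $X^{**}$ by the annihilator of $X^\odot$, with $T^{\odot*}$ induced by $T^{**}$ through the restriction map $r\colon X^{**}\to X^{\odot*}$, $x^{**}\mapsto x^{**}|_{X^\odot}$; this intertwining $r\circ T^{**}=T^{\odot*}\circ r$ yields the implication from the classical Gantmacher condition to its sun-star version. The reverse implication is the delicate part: given a bounded sequence $\{x_n\}\subset X$, Banach--Alaoglu produces a weak-$*$ convergent subnet $j\,x_{n_\alpha}$ in $X^{\odot*}$; weak-$*$ continuity of $R(\lambda,A_0^{\odot*})$ transports this to a weak-$*$ convergent net whose limit lies in $j(X)$ by hypothesis; this is then upgraded to weak convergence of a subsequence of $R(\lambda,A_0)x_n$ in $X$ by combining norm-closedness of $j(X)$ with Eberlein--Smulian to replace subnets by subsequences.
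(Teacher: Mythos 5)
There is a genuine gap, and it sits exactly where you flag the ``delicate part''. First, for context: the paper states this theorem (it is de Pagter's characterization of $\odot$-reflexivity, quoted in the sun-star appendix) without giving a proof, so the comparison is with the standard literature argument, cf.~Appendix~II of \citet{Diekmann1995Delay} and van Neerven's monograph on adjoint semigroups. Your direction ``weak compactness $\Rightarrow$ $\odot$-reflexivity'' is correct and is essentially the standard proof: the half of your Gantmacher transfer that you actually establish (weak compactness of $T$ gives $T^{\odot*}(X^{\odot*})\subseteq j(X)$, via $r\circ T^{**}=T^{\odot*}\circ r$ and surjectivity of $r$ by Hahn--Banach), combined with $\mathrm{range}\,(\lambda I-A_0^{\odot*})^{-1}=D(A_0^{\odot*})$, norm-closedness of $j(X)$, and $j(X)\subseteq X^{\odot\odot}$. (A cosmetic slip: $j$ is in general only an isomorphism onto its closed range, not isometric; this does not affect the argument.)

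The forward direction, however, rests on the unproved converse half of your ``sun-star Gantmacher'': $T^{\odot*}(X^{\odot*})\subseteq j(X)\Rightarrow T$ weakly compact. Your duality bookkeeping only yields $T^{**}x^{**}\in j_{**}(X)+(X^\odot)^{\perp}$, because $r$ annihilates $(X^\odot)^{\perp}$, and this annihilator is large in the relevant setting ($X=L^1$ is an L-summand in its bidual, so singular components genuinely survive). Your proposed upgrade is where it breaks: the weak* convergence you obtain for $j\,(\lambda I-A_0)^{-1}x_{n_\alpha}$ is convergence in $\sigma(X,X^\odot)$, strictly weaker than weak convergence in $X$ (compare $e_n\to 0$ coordinatewise in $\ell^1$, which has no weakly convergent subsequence by Schur's property); norm-closedness of $j(X)$ cannot bridge that, and invoking Eberlein--Smulian presupposes the relative weak compactness you are trying to prove, so the step is circular. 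The actual proof needs two further ideas. Writing $R(\lambda)=(\lambda I-A_0)^{-1}$, $S_\lambda=(\lambda I-A_0^*)^{-1}\colon X^*\to X^\odot$, $\kappa\colon X\to X^{**}$, one checks the identities $R(\lambda)^{**}=S_\lambda^*\circ r$, $r\circ S_\mu^*=(\mu I-A_0^{\odot*})^{-1}$, and $S_\lambda^*\circ j=\kappa\, R(\lambda)$; hence under $\odot$-reflexivity
\begin{equation*}
\bigl(R(\lambda)R(\mu)\bigr)^{**}=S_\lambda^*\,(\mu I-A_0^{\odot*})^{-1}\,r,\qquad \mathrm{range}\subseteq S_\lambda^*\bigl(j(X)\bigr)=\kappa\,R(\lambda)X\subseteq\kappa(X),
\end{equation*}
so \emph{products} of resolvents are weakly compact by classical Gantmacher. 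One then removes the extra factor using the resolvent identity: $\lambda R(\lambda)R(\mu)\to R(\mu)$ in \emph{operator norm} as $\lambda\to+\infty$ (Hille--Yosida bounds), and the weakly compact operators form a norm-closed set. Without an argument of this type --- which is essentially de Pagter's proof, and also shows why your criterion cannot be expected to hold as a formal ``iff'' for general $T$ --- your forward implication is unsubstantiated.
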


\begin{corollary}
For delay equations, $X$ is $\odot$-reflexive with respect to \\$\{T_0(t)\}_{t\geq 0}$ if and only if the delay is finite. 
\end{corollary}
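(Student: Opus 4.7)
The plan is to invoke the preceding Theorem, which reduces the claim to an iff statement about weak compactness of the resolvent $R(\lambda, A_0) := (\lambda I - A_0)^{-1}$ on $L^1([-h, 0])$, where $h \in (0, \infty]$ denotes the delay. In both directions, the starting point is the explicit representation obtained by integrating $\lambda \phi - \phi' = f$ with the boundary condition $\phi(0) = 0$:
\begin{equation*}
(R(\lambda, A_0) f)(\theta) = \int_\theta^0 \ee^{\lambda(\theta - \sigma)} f(\sigma) \dd\sigma.
\end{equation*}

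For the \emph{finite-delay} direction, when $h < \infty$ the kernel $(\theta, \sigma) \mapsto \ee^{\lambda(\theta - \sigma)}$ is bounded on $[-h, 0]^2$, so $R(\lambda, A_0)$ sends $L^1[-h, 0]$ continuously into $AC \subset C[-h, 0]$. I would show that the image of the unit ball is uniformly bounded and equicontinuous — the uniform bound from the trivial $L^\infty$ estimate on the kernel, the equicontinuity from the continuity of $\theta \mapsto \ee^{\lambda\theta}$ combined with absolute continuity of the indefinite integral — and then apply Arzelà--Ascoli to conclude that the image of the unit ball is relatively compact in $C[-h, 0]$. Since $[-h, 0]$ has finite measure, the continuous embedding $C[-h, 0] \hookrightarrow L^1[-h, 0]$ preserves relative compactness, so $R(\lambda, A_0)$ is compact, \emph{a fortiori} weakly compact, and the Theorem supplies $\odot$-reflexivity.

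For the \emph{infinite-delay} direction I would first verify, via Fubini, that the same formula defines a bounded operator of norm at most $1/\Re\lambda$ on $L^1(-\infty, 0]$ for $\Re\lambda > 0$, so that $\lambda \in \rho(A_0)$ and the Theorem can be applied there. The key test sequence is the escaping family $f_n := \chi_{[-n-1, -n]}$, which is bounded in $L^1$. A direct calculation shows that $\phi_n := R(\lambda, A_0) f_n$ vanishes identically on $[-n, 0]$, while $\int_{-n-1}^{-n} \phi_n(\theta) \dd\theta$ equals a positive constant independent of $n$. Hence for every $N > 0$ and every $n > N$, $\int_{-\infty}^{-N} |\phi_n(\theta)|\dd\theta$ is uniformly bounded below, violating the ``vanishing outside sets of finite measure'' condition in the Dunford--Pettis characterization of relatively weakly compact subsets of $L^1$. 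Therefore the image of the bounded set $\{f_n\}$ is not relatively weakly compact, so $R(\lambda, A_0)$ is not weakly compact, and the Theorem denies $\odot$-reflexivity.

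Both implications are elementary once the resolvent formula is in hand; the only genuine subtlety lies in the $h = \infty$ direction, where one must pick $\lambda$ in the resolvent set and identify which piece of the Dunford--Pettis criterion fails. The failure is not equi-integrability on bounded sets (which would persist for any reasonable kernel) but rather tightness at $-\infty$, and it is precisely the boundedness of the delay interval that rules out such ``mass escape'' in the finite-delay case.
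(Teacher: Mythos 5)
Your reduction to the preceding theorem and your resolvent formula are both correct, and the infinite-delay half of the argument is sound: with $f_n=\chi_{[-n-1,-n]}$ the image $\phi_n=R(\lambda,A_0)f_n$ indeed vanishes a.e.\ on $(-n,0]$ and carries a fixed amount of mass on $[-n-1,-n]$; since any set $E$ of finite measure has $|E\cap[-n-1,-n]|\to 0$ while the $\phi_n$ are uniformly bounded there, a fixed quantum of mass stays outside $E$ for large $n$, so the tightness part of the Dunford--Pettis criterion fails, exactly as you say. (Note the paper states this corollary without proof, as an immediate consequence of the weak-compactness theorem, so yours is the natural route to compare against the cited literature rather than against an in-text argument.)

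The finite-delay half, however, contains a genuine error: the image of the $L^1$ unit ball under $R(\lambda,A_0)$ is \emph{not} equicontinuous, and not relatively compact in $C[-h,0]$. Take $f_n=n\,\chi_{[\sigma_0,\sigma_0+1/n]}$ for some $\sigma_0\in(-h,0)$; then $\|f_n\|_1=1$, while $\phi_n=R(\lambda,A_0)f_n$ vanishes for $\theta>\sigma_0+1/n$ and converges to $\ee^{\lambda(\theta-\sigma_0)}$ for $\theta<\sigma_0$, so $\phi_n$ climbs from $0$ to approximately $1$ across an interval of length $1/n$: the pointwise limit is discontinuous and no subsequence converges uniformly. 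Absolute continuity of each individual $\phi$ does not give a modulus of continuity uniform over the ball, precisely because the modulus degenerates as $f$ concentrates; so the Arzel\`a--Ascoli step fails. Fortunately the theorem only demands \emph{weak} compactness, and the bound you already derived closes the gap without any detour through $C$: for $\Re\lambda\geq 0$ one has $|\ee^{\lambda(\theta-\sigma)}|\leq 1$ on $\{\theta\leq\sigma\leq 0\}$, hence $\|R(\lambda,A_0)f\|_{L^\infty}\leq\|f\|_{L^1}$, and a uniformly $L^\infty$-bounded set on the finite-measure interval $[-h,0]$ is uniformly integrable, hence relatively weakly compact in $L^1$ by Dunford--Pettis. (If you insist on norm compactness on $L^1$ --- which in fact holds, though it is not needed --- the correct tool is not Arzel\`a--Ascoli but the Kolmogorov--Riesz translation criterion: from $\phi'=\lambda\phi-f$ one gets $\|\phi'\|_{L^1}\leq\bigl(1+|\lambda|/\Re\lambda\bigr)\|f\|_{L^1}$ and then $\|\phi(\cdot+s)-\phi\|_{L^1}\leq|s|\,\|\phi'\|_{L^1}$ uniformly over the ball.)
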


\begin{exercise}
    Verify that, in the present setting, $j(\phi)(\theta) = -\int_\theta^0 \phi(\alpha)\dd \alpha$. 
\end{exercise}

Conclusion: by taking twice adjoints and a restriction, we have come full circle. 

The rule for extension applies in the boundary point $\theta=0$ of the interval $[-1,0]$. 
Within the setting of $X=L^1$, we cannot `talk' about point values. But in the setting of $X^{\odot*}=NBV$ we have the Dirac measure concentrated in $\theta=0$ at our disposal. 
It is represented by 
\begin{equation*}
    r^{\odot*}(\theta) = 
    \begin{cases}
        -1 & \theta<0 , \\
        0 & \theta=0.
    \end{cases}
\end{equation*}
Let the rule for extension be given by $F\colon X\to \mathbb{R}$ (or, in other words, consider the RE~\eqref{RE-nonlinear}). 
Below we show that the initial value problem 
\begin{equation}\tag{RE} \label{RE-IVP}
    \begin{cases}
        b(t)=F(b_t), & t>0 \\
        b(\theta) = \phi(\theta), & -1\leq\theta\leq 0, 
    \end{cases}
\end{equation}
and the \emph{abstract integral equation} (AIE)
\begin{equation}\tag{AIE} \label{AIE}
    u(t) = T_0(t)\phi + j^{-1} \int_0^t T^{\odot*}_0 (t-\tau) r^{\odot*} F(u(\tau)) \dd\tau
\end{equation}
are equivalent in the sense that, when $b$ is a solution of \eqref{RE-IVP}, then $u(t):=b_t$ satisfies \eqref{AIE}, while if $u$ is a solution of \eqref{AIE}, then 
\begin{equation*}
    b(t)=\begin{cases}
        \phi(t), & t \in [-1,0],\\
        u(t)(0), & t>0, 
    \end{cases}
\end{equation*}
satisfies~\eqref{RE-IVP}. 

But first, we explain some of the theoretical background. 

\begin{lemma}[Lemma III.2.3]
    Let $h\colon \mathbb{R}\to X^{\odot*}$ be norm continuous. Define, for $t\geq 0$, $v(t) = \int_0^t T^{\odot*}(t-\sigma)h(\sigma)\dd\sigma \in X^{\odot*}$ by 
    \begin{equation*}
        \langle v(t), x^\odot \rangle = \int_0^t \langle h(\sigma), T^\odot (t-\sigma)x^\odot \rangle \dd\sigma. 
    \end{equation*}
Then:
\begin{enumerate}
    \item $v(t) \in X^{\odot\odot}$ (smoothing);
    \item $t\mapsto v(t)$ is norm continuous; 
    \item if $X$ is $\odot$-reflexive with respect to $\{T(t)\}$ and $M$, $\omega$ are such that $\|T(t)\|\leq M \ee^{\omega t}$, then 
    \begin{equation*}
        \|j^{-1}v(t)\|_X \leq M \frac{\ee^{\omega t}-1}{\omega} \sup_{0\leq \sigma\leq t} \|h(\sigma)\|_{X^{\odot*}},
    \end{equation*}
    \item $\frac{1}{t} v(t) \xrightarrow{*} h(0)$ for $t\to 0^+$, where $\xrightarrow{*}$ indicates weak* convergence. 
\end{enumerate} 

\end{lemma}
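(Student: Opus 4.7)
I would prove the four parts in the order (ii), (i), (iii), (iv). The unifying principle is that $v(t)$ is only defined through its pairing with $X^\odot$, so every norm or continuity statement must be obtained by testing against $x^\odot$ and then sweeping $\|x^\odot\|\le 1$. Since $\{T^{\odot*}(t)\}$ is not strongly continuous, I cannot treat $T^{\odot*}$ as a family of operators continuously indexed in the operator norm; any argument that formally looks like $[T^{\odot*}(s)-I]h(\sigma)\to 0$ must be rewritten so that the continuity is extracted on the $X^\odot$ side or from $h$.

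\textbf{Parts (ii) and (i).} For (ii), the right bookkeeping is to change variables $\tau=t+s-\sigma$ and $\tau=t-\sigma$ respectively in $v(t+s)$ and $v(t)$, which yields
\begin{equation*}
v(t+s) - v(t) \;=\; \int_0^t T^{\odot*}(\tau)\bigl[h(t+s-\tau) - h(t-\tau)\bigr]\dd\tau \;+\; \int_t^{t+s} T^{\odot*}(\tau) h(t+s-\tau)\dd\tau.
\end{equation*}
The second integral has $X^{\odot*}$-norm at most $sMe^{\omega(t+s)}\sup\|h\|$, which is $O(s)$. For the first, pairing with an arbitrary $x^\odot$ of unit norm, estimating by $\|h(t+s-\tau)-h(t-\tau)\|_{X^{\odot*}}\|T^\odot(\tau)\|$, using $\|T^\odot(\tau)\|=\|T(\tau)\|\le Me^{\omega\tau}$, and invoking uniform norm-continuity of $h$ on compact intervals delivers a bound independent of $x^\odot$ that vanishes with $s$; an analogous argument handles $s<0$. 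For (i), applying the semigroup property inside the defining pairing gives
\begin{equation*}
v(t+s) \;-\; T^{\odot*}(s) v(t) \;=\; \int_t^{t+s} T^{\odot*}(t+s-\sigma)\, h(\sigma)\dd\tau,
\end{equation*}
whose $X^{\odot*}$-norm is $O(s)$. Combined with (ii), this forces $\|T^{\odot*}(s)v(t) - v(t)\|_{X^{\odot*}}\to 0$ as $s\to 0^+$, placing $v(t)$ in the maximal subspace of strong continuity of $\{T^{\odot*}(t)\}$, which is precisely $X^{\odot\odot}$.

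\textbf{Part (iii).} Starting from the defining identity, for unit $x^\odot$,
\begin{equation*}
|\langle v(t), x^\odot\rangle| \;\le\; \int_0^t \|h(\sigma)\|_{X^{\odot*}} \, \|T^\odot(t-\sigma)x^\odot\|_{X^\odot}\dd\sigma \;\le\; \sup_{[0,t]}\|h\|\cdot M\,\frac{\ee^{\omega t}-1}{\omega},
\end{equation*}
using again $\|T^\odot(s)\|=\|T(s)\|\le Me^{\omega s}$ (adjoint and restriction to $X^\odot$ both preserve the operator norm). Taking supremum over $\|x^\odot\|\le 1$ bounds $\|v(t)\|_{X^{\odot*}}$ by the desired quantity. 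Under $\odot$-reflexivity, part (i) lets me write $v(t) = jy(t)$ with $y(t)\in X$, and since $j$ is an isometric embedding, $\|j^{-1}v(t)\|_X = \|v(t)\|_{X^{\odot*}}$, which closes the claim.

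\textbf{Part (iv) and main obstacle.} Pairing with $x^\odot \in X^\odot$ and splitting,
\begin{equation*}
\frac{1}{t}\langle v(t), x^\odot\rangle \;=\; \frac{1}{t}\int_0^t \langle h(\sigma)-h(0), T^\odot(t-\sigma)x^\odot\rangle\dd\sigma \;+\; \frac{1}{t}\int_0^t \langle h(0), T^\odot(t-\sigma)x^\odot\rangle\dd\sigma.
\end{equation*}
The first term is $o(1)$ by norm-continuity of $h$ at $0$ together with $\|T^\odot(t-\sigma)x^\odot\|\le Me^{\omega t}\|x^\odot\|$; after the substitution $\tau = t-\sigma$, the second term is $\frac{1}{t}\int_0^t \langle h(0), T^\odot(\tau)x^\odot\rangle\dd\tau$, which converges to $\langle h(0), x^\odot\rangle$ by strong continuity of $\tau\mapsto T^\odot(\tau)x^\odot$ at $0$ and the mean-value property of the Bochner integral. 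The principal technical obstacle threading through the whole argument is that strong continuity is available only on $X^\odot$, never on $X^{\odot*}$: every manipulation that looks like a continuity-in-time statement for $T^{\odot*}$ has to be rerouted either through a pairing with $x^\odot$ or through the continuity of $h$, with bounded-interval uniformity checked by hand.
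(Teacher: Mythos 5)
The paper itself gives no proof of this statement: it is imported verbatim (as ``Lemma III.2.3'') from the textbook of \citet{Diekmann1995Delay}, so your attempt can only be measured against the standard argument there. Your parts (i), (ii) and (iv) are correct and are essentially that argument: norm continuity of $t\mapsto v(t)$ via the change of variables and uniform continuity of $h$ on compact intervals, the shift identity $v(t+s)-T^{\odot*}(s)v(t)=\int_t^{t+s}T^{\odot*}(t+s-\sigma)h(\sigma)\dd\sigma$ combined with (ii) to place $v(t)$ in the maximal subspace of strong continuity, which is $X^{\odot\odot}$, and the split $h(\sigma)=h(0)+\bigl(h(\sigma)-h(0)\bigr)$ for the weak* limit. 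Your guiding discipline --- that continuity must always be extracted on the $X^\odot$ side or from $h$, never from $T^{\odot*}$ itself --- is exactly right and is what the textbook proof does.

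Part (iii), however, contains a genuine gap: the assertion that ``$j$ is an isometric embedding''. In general $j$ only satisfies $\|jx\|_{X^{\odot*}}\le\|x\|_X$, with equality if and only if $X^\odot$ is norming for $X$; for a semigroup with $\|T(t)\|\le M\ee^{\omega t}$ and $M>1$ norming can fail (an equivalent renorming of $X$ preserves $\odot$-reflexivity but destroys isometry of $j$), and the sharp general lower bound is only $\|jx\|\ge\frac1M\|x\|$. Worse, the inequality you do have points the wrong way: your bound on $\|v(t)\|_{X^{\odot*}}=\|j\,y(t)\|$ with $y(t)=j^{-1}v(t)$ gives no control of $\|y(t)\|_X$, since $\|jy\|\le\|y\|$; and inserting the general lower bound produces the constant $M^2\frac{\ee^{\omega t}-1}{\omega}$ rather than the stated $M\frac{\ee^{\omega t}-1}{\omega}$. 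The standard repair is a resolvent approximation: for $x^*\in X^*$ with $\|x^*\|\le1$, write $\langle x^*,y(t)\rangle=\lim_{\lambda\to\infty}\langle\lambda R(\lambda,A^*)x^*,y(t)\rangle$ (valid because $\lambda R(\lambda,A)y(t)\to y(t)$ in norm), note $\lambda R(\lambda,A^*)x^*\in D(A^*)\subset X^\odot$ so the defining pairing of $v(t)$ applies, and estimate $\|T^\odot(t-\sigma)\lambda R(\lambda,A^*)x^*\|=\|\lambda R(\lambda,A^*)T^*(t-\sigma)x^*\|\le\frac{\lambda M}{\lambda-\omega}\ee^{\omega(t-\sigma)}$ by the Laplace representation of the resolvent applied to the orbit $s\mapsto T^*(s+t-\sigma)x^*$ --- crucially avoiding the multiplication of two separate operator-norm bounds. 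Letting $\lambda\to\infty$ yields exactly $\|j^{-1}v(t)\|_X\le M\frac{\ee^{\omega t}-1}{\omega}\sup_{0\le\sigma\le t}\|h(\sigma)\|_{X^{\odot*}}$. (In the concrete setting of this appendix, $X=L^1$ and $X^\odot=C_0([0,1))$ is norming, so your shortcut happens to be valid there; but the lemma is stated for general $M$, $\omega$, where it is not.)
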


\begin{theorem}[Theorem III.2.4]
    Let $X$ be $\odot$-reflexive with respect to $\{T_0(t)\}_{t\geq 0}$. Let $B\colon X \to X^{\odot*}$ be a bounded linear operator. The linear~AIE
    \begin{equation*}
    u(t) = T_0(t)x + j^{-1} \int_0^t T^{\odot*}_0 (t-\sigma) B u(\sigma) \dd\sigma
\end{equation*}
has a unique continuous solution $u=u(t,x)$ and, if we define $T(t)x = u(t,x)$, then:
\begin{enumerate}
    \item $\{T(t)\}_{t\geq 0}$ is a strongly continuous semigroup on $X$; 
    \item the $X^\odot$ space does not change; 
    \item $X$ is $\odot$-reflexive with respect to $\{T(t)\}_{t\geq 0}$; 
    \item\label{generator} Let $A$ denote the generator of $\{T(t)\}_{t\geq 0}$, then $D(A^{\odot*})=D(A^{\odot*}_0)$ and $A^{\odot*} = A^{\odot*}_0+Bj^{-1}$;
    \item\label{domain} $D(A)=\{ x \in j^{-1} D(A^{\odot*}_0) \colon A^{\odot*}_0 jx + Bx \in j(X)=X^{\odot\odot}\}$, and $Ax = j^{-1}(A^{\odot*}_0jx +Bx)$. 
\end{enumerate}

\end{theorem}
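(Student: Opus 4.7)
The plan is to read the AIE as a variation-of-constants formula and set up a contraction argument using the smoothing estimates of the preceding lemma. On $C([0,\tau],X)$ with sup norm, define
\[
(\Gamma u)(t) := T_0(t)x + j^{-1}\int_0^t T_0^{\odot*}(t-\sigma)\,Bu(\sigma)\,\dd\sigma.
\]
By parts 1 and 2 of the lemma the integral lands in $X^{\odot\odot}=j(X)$ and $t\mapsto(\Gamma u)(t)$ is norm-continuous into $X$; by part 3 combined with boundedness of $B$,
\[
\|\Gamma u-\Gamma v\|_{C([0,\tau],X)}\leq M\|B\|\,\frac{\ee^{\omega\tau}-1}{\omega}\,\|u-v\|_{C([0,\tau],X)},
\]
so $\Gamma$ is a strict contraction for $\tau$ small. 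Concatenating the local fixed points gives a unique global continuous solution $u(\cdot,x)$, linear in $x$ by uniqueness. Setting $T(t)x:=u(t,x)$, the semigroup identity $T(t+s)=T(t)T(s)$ follows from uniqueness, since translation invariance of $T_0$ and $T_0^{\odot*}$ makes $t\mapsto u(t+s,x)$ solve the AIE with initial datum $u(s,x)$.

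Strong continuity at $0$ is immediate: $T_0(t)x\to x$ while the integral term is $O(t)\|B\|\|x\|$ by the smoothing estimate, establishing~1. For~2 I would pass to adjoints: the correction $T^*(t)-T_0^*(t)$ is the adjoint of an operator whose range sits in $X^{\odot\odot}$, from which one reads off that $T^*(t)$ and $T_0^*(t)$ have the same maximal subspace of strong continuity, so $X^\odot$ is unchanged. Part~3 follows from the resolvent characterization of $\odot$-reflexivity recalled just above the statement: writing the resolvent of $A$ for large $\lambda$ as the Neumann series
\[
(\lambda-A)^{-1} = (\lambda-A_0)^{-1}\sum_{n\geq 0}\bigl(Bj^{-1}(\lambda-A_0^{\odot*})^{-1}\bigr)^{n}\bigl|_{X},
\]
one sees that weak compactness of $(\lambda-A_0)^{-1}$ is inherited by $(\lambda-A)^{-1}$ because the outer factor is weakly compact and the rest is bounded.

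For parts 4 and 5 the strategy is to differentiate the AIE at $t=0$ in the weak* sense. For $x\in D(A)$, testing the difference quotient $t^{-1}(jT(t)x-jx)$ against an arbitrary $x^\odot\in X^\odot$ and invoking part~4 of the lemma yields
\[
\langle A^{\odot*}jx,x^\odot\rangle = \langle A_0^{\odot*}jx,x^\odot\rangle + \langle Bx,x^\odot\rangle,
\]
so $D(A_0^{\odot*})\subseteq D(A^{\odot*})$ and $A^{\odot*}=A_0^{\odot*}+Bj^{-1}$ on that domain; the reverse inclusion is obtained by applying the same computation to the resolvent, using that $B$ is defined on all of $X$. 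Since $A$ is the part of $A^{\odot*}$ in $j(X)=X^{\odot\odot}$, the description of $D(A)$ in part~5 is just the condition that $A_0^{\odot*}jx+Bx$ belongs to $j(X)$.

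The principal obstacle is controlling everything through the weak* pairing. The integrand $T_0^{\odot*}(t-\sigma)Bu(\sigma)$ need not be norm continuous in $X^{\odot*}$, so interchange of the integral with a functional, identification of the time derivative, and convergence of the Neumann series must all be verified against elements of $X^\odot$. This is precisely where $\odot$-reflexivity is indispensable: it guarantees that the weak* integral, after the smoothing of part~1 of the lemma, actually lies in $j(X)$, allowing the AIE to be read as a genuine equation in $X$ and the perturbed generator to inherit the clean description of part~5 rather than only a description on $X^{\odot*}$.
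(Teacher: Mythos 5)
Your proposal is correct in outline and follows essentially the same route as the source: the chapter itself states this result without proof, importing it as Theorem~III.2.4 of \citet{Diekmann1995Delay}, and the argument there is exactly your scheme --- a fixed-point/successive-approximation construction for the AIE built on the smoothing lemma, preservation of $X^\odot$ and of $\odot$-reflexivity, and identification of $A^{\odot*}$ by weak$^*$ differentiation of the AIE (part~4 of the lemma) together with a resolvent argument for the reverse domain inclusion, with $A$ recovered as the part of $A^{\odot*}$ in $j(X)$.

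Two local justifications need repair, though neither is fatal. For item~2, the inference ``the correction has range in $X^{\odot\odot}$, hence the maximal subspace of strong continuity is unchanged'' is not valid as stated; what you actually need --- and already have from part~3 of the lemma --- is the operator-norm estimate $\|T(t)-T_0(t)\|=O(t)$ as $t\to 0^+$, which passes to the adjoints and shows that $t\mapsto T^*(t)g$ is norm continuous at $0$ if and only if $t\mapsto T_0^*(t)g$ is. For item~3, your Neumann series does not typecheck as written: $\sum_{n}\bigl(Bj^{-1}(\lambda-A_0^{\odot*})^{-1}\bigr)^{n}$ acts on $X^{\odot*}$ and cannot be pre-composed with $(\lambda-A_0)^{-1}$ on $X$. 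Order the factors instead as $(\lambda-A)^{-1}=\sum_{n\geq 0}\bigl(j^{-1}(\lambda-A_0^{\odot*})^{-1}B\bigr)^{n}(\lambda-A_0)^{-1}$, noting that $D(A_0^{\odot*})\subseteq X^{\odot\odot}=j(X)$ (it is the subspace whose closure defines $X^{\odot\odot}$), so each factor $j^{-1}(\lambda-A_0^{\odot*})^{-1}B$ is a bounded operator on $X$ with small norm for $\lambda$ large; weak compactness of $(\lambda-A)^{-1}$ then follows from the two-sided ideal property of weakly compact operators and norm-closedness of that class, without any appeal to Gantmacher's theorem. With these corrections, your treatment of items~4--5 is sound: the weak$^*$ limit of the difference quotient applies to every $\Phi\in D(A_0^{\odot*})$ precisely because such $\Phi$ lie in $j(X)$, and the reverse inclusion $D(A^{\odot*})\subseteq D(A_0^{\odot*})$ follows since $\lambda-A^{\odot*}$ extends the bijective operator $\lambda-(A_0^{\odot*}+Bj^{-1})$ while itself being injective.
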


\begin{remark}
    When going from \ref{generator} to \ref{domain}, information about the action ends up in information about the domain. 
\end{remark}

\begin{theorem}[Section VII.5]
    Let $G \colon X \to X^{\odot*}$ be Fr\'echet differentiable. Define $\Sigma(t)x=u(t,x)$, where $u$ is now the unique solution of the \emph{nonlinear}~AIE 
    \begin{equation*}
    u(t) = T_0(t)x + j^{-1} \int_0^t T^{\odot*}_0 (t-\sigma) G(u(\sigma)) \dd\sigma.
    \end{equation*}
Assume that $\Sigma(t)\overline{x}=\overline{x}$, for all $t \geq 0$. Then, $\Sigma(t)$ is Fr\'echet differentiable in $\overline{x}$ and $D\Sigma(t)(\overline{x}) = T(t)$, where $T(t)$ is the solution map corresponding to the linear AIE with $B=DG(\overline{x})$. 
In words: `solving AIE' and `linearization' commute. 

\end{theorem}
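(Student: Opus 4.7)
The plan is to set up the standard Fréchet differentiability argument at a steady state using the abstract integral equation, and then close it with a Grönwall-type estimate adapted to the $\odot *$ framework. Throughout, write $h := x - \overline{x}$ with $\|h\|_X$ small, and let $v(t) := T(t)h$ denote the candidate derivative, i.e.\ the solution of the linear AIE with $B = DG(\overline{x})$ applied to the initial datum $h$. The quantity we wish to control is
\begin{equation*}
    w(t) := \Sigma(t)(\overline{x}+h) - \overline{x} - v(t),
\end{equation*}
and the goal is to prove $\|w(t)\|_X = o(\|h\|_X)$ uniformly on compact $t$-intervals.

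First, I would subtract the two AIEs. Since $\overline{x}$ is a fixed point of $\Sigma(t)$, the AIE satisfied by $\overline{x}$ gives $\overline{x} = T_0(t)\overline{x} + j^{-1}\int_0^t T_0^{\odot*}(t-\sigma) G(\overline{x})\dd\sigma$. Subtracting this from the AIE for $\Sigma(t)(\overline{x}+h)$ and from the linear AIE for $v$, and using the Fréchet expansion
\begin{equation*}
    G(\Sigma(\sigma)(\overline{x}+h)) - G(\overline{x}) = DG(\overline{x})\bigl(\Sigma(\sigma)(\overline{x}+h)-\overline{x}\bigr) + R\bigl(\Sigma(\sigma)(\overline{x}+h)-\overline{x}\bigr),
\end{equation*}
with $\|R(\eta)\|_{X^{\odot*}} = o(\|\eta\|_X)$, one obtains
\begin{equation*}
    w(t) = j^{-1}\int_0^t T_0^{\odot*}(t-\sigma)\bigl[DG(\overline{x})w(\sigma) + R\bigl(\Sigma(\sigma)(\overline{x}+h)-\overline{x}\bigr)\bigr]\dd\sigma.
\end{equation*}

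Second, I would prove the a priori Lipschitz estimate $\|\Sigma(\sigma)(\overline{x}+h)-\overline{x}\|_X \le K\|h\|_X$ on $[0,t]$. This comes directly from the nonlinear AIE by subtracting the two AIEs again, applying the bound from Lemma~III.2.3(iii) (which is available because $X$ is $\odot$-reflexive), using local Lipschitz continuity of $G$ (a consequence of Fréchet differentiability) and finishing with Grönwall. With this in hand, the remainder term satisfies
\begin{equation*}
    \sup_{0 \le \sigma \le t}\|R(\Sigma(\sigma)(\overline{x}+h)-\overline{x})\|_{X^{\odot*}} = o(\|h\|_X),
\end{equation*}
since $\|\Sigma(\sigma)(\overline{x}+h)-\overline{x}\|_X \to 0$ uniformly as $h \to 0$ and $R$ is $o$ of its argument.

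Finally, I would apply the Lemma~III.2.3(iii) estimate to the $w$-equation to get
\begin{equation*}
    \|w(t)\|_X \le C(t)\int_0^t \|DG(\overline{x})\|\,\|w(\sigma)\|_X\dd\sigma + C(t)\cdot t\cdot o(\|h\|_X),
\end{equation*}
and close the loop by Grönwall's inequality to conclude $\|w(t)\|_X = o(\|h\|_X)$, uniformly on any compact $t$-interval. This is exactly Fréchet differentiability of $\Sigma(t)$ at $\overline{x}$ with derivative $T(t)$.

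The main obstacle I expect is the mismatch between the two topologies: the remainder bound $\|R(\eta)\|_{X^{\odot*}} = o(\|\eta\|_X)$ lives in the ambient space $X^{\odot*}$, whereas the Grönwall argument must be carried out in the norm of $X$ (the state space). The weak${}^*$ integral together with the smoothing/norm-continuity statements of Lemma~III.2.3 is what bridges the two: it produces an element of $X^{\odot\odot} = j(X)$ whose $X$-norm is controlled by the $X^{\odot*}$-sup-norm of the integrand. Verifying that this interplay works cleanly — in particular that the remainder really yields an $o(\|h\|)$ term in $X$, not merely in $X^{\odot*}$ — is where the $\odot$-reflexivity hypothesis is crucial and where care has to be taken.
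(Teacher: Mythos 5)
Note first that the paper itself contains no proof of this theorem: it is quoted, with the bracketed pointer ``Section VII.5'', from the textbook of \citet{Diekmann1995Delay}, so there is no in-paper argument to compare against. Your sketch correctly reconstructs the standard proof used in that source: subtract the three AIEs (for $\Sigma(t)(\overline{x}+h)$, for the fixed point $\overline{x}$, and for $v(t)=T(t)h$), Taylor-expand $G$ at $\overline{x}$ to get the integral equation for $w$, establish the a priori bound $\|\Sigma(\sigma)(\overline{x}+h)-\overline{x}\|_X \le K\|h\|_X$, and close with Gr\"onwall. You also correctly identify the genuinely delicate point, namely that the remainder estimate lives in the $X^{\odot*}$-norm while Gr\"onwall must run in the $X$-norm, and that Lemma~III.2.3 together with $\odot$-reflexivity is what bridges the two.

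Two refinements are needed to make the argument airtight. First, your parenthetical claim that local Lipschitz continuity of $G$ is ``a consequence of Fr\'echet differentiability'' is false as stated: a map can be Fr\'echet differentiable everywhere with derivative unbounded on bounded sets. Either assume $G\in C^1$ (the standing hypothesis in the cited textbook), or observe that for your a priori bound you only need the pointwise estimate $\|G(\overline{x}+\eta)-G(\overline{x})\|_{X^{\odot*}} \le (\|DG(\overline{x})\|+1)\|\eta\|_X$ for $\|\eta\|_X$ small, which differentiability at $\overline{x}$ does give, combined with a continuation argument ensuring the perturbed orbit remains in that small ball for $\|h\|_X$ small and $t$ in a compact interval. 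Second, Lemma~III.2.3(iii) as quoted bounds the weak* convolution by the \emph{supremum} of the integrand, whereas your Gr\"onwall step silently uses the convolution form
\begin{equation*}
    \Bigl\| j^{-1}\int_0^t T_0^{\odot*}(t-\sigma)\,h(\sigma)\dd\sigma \Bigr\|_X \le M \int_0^t \ee^{\omega(t-\sigma)} \|h(\sigma)\|_{X^{\odot*}} \dd\sigma .
\end{equation*}
This sharper form does hold — it drops out of the same pairing argument that proves the lemma, estimating $|\langle v(t),x^\odot\rangle|$ directly — but it is not literally the statement you cite, so it should be derived or invoked explicitly. With these two points patched, your proof is complete and coincides in essence with the proof in the cited source.
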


As a result, the AIE framework allows us to prove local stability and bifurcation results (such as the principle of linearized stability, stable-, unstable-, and center-manifold theorems, and the Hopf bifurcation theorem) by copying ODE proofs. For extension to the non-$\odot$-reflexive case, see \citet{Diekmann2012Blending} and \citet{Janssens2025}. 

So, once we prove the equivalence of \eqref{RE-IVP} and \eqref{AIE}, all these stability and bifurcation results become available for \eqref{RE-IVP}. 
The following lemma is the key step for proving the equivalence. 

\begin{lemma}
    Let $f \colon \mathbb{R}_+\to\mathbb{R}$ be continuous (in fact $f \in L^1$ suffices). Then 
    \begin{equation*}
        \left( j^{-1} \left(\int_0^t T^{\odot*}_0(t-\tau) r^{\odot*}f(\tau)\dd\tau \right) \right)(\theta) = 
        \begin{cases}
            f(t+\theta), & -t \leq \theta \leq 0, \\
            0, & \theta < -t.
        \end{cases}
    \end{equation*}
\end{lemma}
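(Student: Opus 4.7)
The plan is to pair the weak* integral against an arbitrary test element $x^\odot \in X^\odot$ and match the resulting scalar expression against the explicit form of the pairing $\langle j\psi, x^\odot\rangle$ to read off the candidate $\psi \in L^1[-1,0]$. Since Lemma~III.2.3 (the smoothing lemma quoted above) already guarantees that
\begin{equation*}
v(t) := \int_0^t T^{\odot*}_0(t-\tau)\, r^{\odot*} f(\tau)\,\dd\tau \;\in\; X^{\odot\odot} = j(X),
\end{equation*}
the preimage $j^{-1}v(t)$ is a well-defined element of $X$, and it is determined by its pairing with elements of the weak*-total set $X^\odot$.

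First I would compute the integrand. The element $r^{\odot*}\in X^{\odot*}=NBV_0$ corresponds to the unit Dirac measure at $\theta=0$, so for any $g\in X^\odot = C_0([0,1))$ one has $\langle r^{\odot*},g\rangle = g(0)$. Using the explicit form of the dual translation semigroup, $(T^\odot_0(s)x^\odot)(\sigma)=x^\odot(\sigma+s)$ for $\sigma+s\leq 1$ and $0$ otherwise, one obtains
\begin{equation*}
\langle T^{\odot*}_0(s)\, r^{\odot*},\, x^\odot\rangle
= \langle r^{\odot*},\, T^\odot_0(s)x^\odot\rangle
= (T^\odot_0(s)x^\odot)(0)
= x^\odot(s)
\end{equation*}
for $0\leq s\leq 1$, and zero for $s>1$.

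Next, by the definition of the weak* integral and the change of variable $\sigma = t-\tau$,
\begin{equation*}
\langle v(t), x^\odot\rangle
=\int_0^t \langle T^{\odot*}_0(t-\tau)\,r^{\odot*},\, x^\odot\rangle\, f(\tau)\,\dd\tau
=\int_0^{\min(t,1)} x^\odot(\sigma)\, f(t-\sigma)\,\dd\sigma.
\end{equation*}
On the other hand, for a candidate $\psi\in L^1([-1,0])$ defined by $\psi(\theta)=f(t+\theta)$ for $-\min(t,1)\leq\theta\leq 0$ and $\psi(\theta)=0$ for $\theta<-t$, the definition of $j$ gives
\begin{equation*}
\langle j\psi, x^\odot\rangle = \langle x^\odot,\psi\rangle
= \int_0^1 x^\odot(\sigma)\,\psi(-\sigma)\,\dd\sigma
= \int_0^{\min(t,1)} x^\odot(\sigma)\, f(t-\sigma)\,\dd\sigma,
\end{equation*}
which coincides with $\langle v(t),x^\odot\rangle$ for every $x^\odot\in X^\odot$. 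Since $j$ is injective and the pairing with $X^\odot$ separates points on $X^{\odot\odot}$, this forces $j^{-1}v(t)=\psi$, yielding the claimed formula.

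The main obstacle is simply not a deep one: one must be careful that the weak*-Riemann integral defining $v(t)$ is evaluated by pairing under the integral, which is justified by Lemma~III.2.3, and one must handle the (harmless) case distinction between $t\leq 1$ and $t>1$ in the upper limit of the resulting scalar integral. No other subtleties arise; the identification reduces to a change of variables once the Dirac-pairing $\langle r^{\odot*},g\rangle=g(0)$ is used.
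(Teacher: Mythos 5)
Your proposal is correct and follows essentially the same route as the paper: you compute $\langle T^{\odot*}_0(t-\tau) r^{\odot*}, x^\odot\rangle = x^\odot(t-\tau)$ (for $t-\tau \leq 1$, zero otherwise) via the Dirac pairing, evaluate the weak* integral by pairing under the integral sign, change variables to get $\int_0^{\min(t,1)} x^\odot(\sigma) f(t-\sigma)\,\dd\sigma$, and identify this with $\langle j\psi, x^\odot\rangle$ for the claimed $\psi$. Your version merely makes explicit what the paper leaves implicit, namely the appeal to Lemma~III.2.3 for $v(t)\in X^{\odot\odot}$ and the separation-of-points argument justifying $j^{-1}v(t)=\psi$.
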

\begin{proof}
From 
\begin{equation*}
    (T^\odot_0(t-\tau)g)(\sigma) = 
    \begin{cases}
        g(\sigma+t-\tau), & \sigma+t-\tau \leq 1, \\
        0, & \sigma+t-\tau > 1,
    \end{cases}
\end{equation*}
it follows that 
\begin{equation*}
    \langle r^{\odot*}, T^\odot_0(t-\tau)g \rangle = 
    \begin{cases}
        g(t-\tau), & t-\tau \leq 1, \\
        0, & t-\tau > 1. 
    \end{cases}
\end{equation*}
Hence, 
\begin{align*}
\int_0^t \langle r^{\odot*}, T^\odot_0(t-\tau)g \rangle f(\tau) \dd\tau 
    &= \int_{\max\{0,t-1\}}^t g(t-\tau) f(\tau) \dd\tau \\
    &= \int_0^{t-\max\{0,t-1\}} g(\sigma) f(t-\sigma) \dd\sigma = \langle g,\phi\rangle, 
\end{align*}
where, for $t\leq 1$, 
\begin{equation*}
    \phi(\theta) = \begin{cases}
        f(t+\theta), & -t\leq \theta \leq 0, \\
        0, & \theta < -t,
    \end{cases}
\end{equation*}
and, for $t>1$, $\phi(\theta) = f(t+\theta)$, $-1\leq \theta\leq 0$. 
\end{proof}

\begin{exercise}
    Complete the proof of \eqref{RE-IVP} $\Leftrightarrow $ \eqref{AIE}. 
\end{exercise}


\CCLsection{State-at-Birth and the Next Generation Operator}
\label{s:NGO}

Due to the recent COVID-19 pandemic, the notion of `reproduction number' has been upgraded from obscure mathematical jargon, used in the context of epidemiological and ecological models, to a general concept. The underlying idea is the following. 

Already at birth, individuals may differ from one another. (Here we should interpret `birth' \emph{in sensu lato}: when dealing with infectious diseases, we consider a host individual that gets infected as `being born'.) So, we characterize newborn individuals by their $i$-state-at-birth. This can capture, for instance, various forms of host heterogeneity.

Let the prevailing constant environmental conditions be given/known. Consider a newborn individual. Based on model assumptions, one can compute the expected lifetime number of her daughters and the distribution of the $i$-state-at-birth of these daughters. This information provides the kernel for an operator that describes how a generation begets the next generation. Appropriately this operator is called the NGO, for \emph{next generation operator}. The basic reproduction number $R_0$ is, by definition, the spectral radius of the NGO \citep{DiekmannHeesterbeek1990}.

Often it is not immediately clear how to relate the above, interpretation guided, description in words to a population dynamic model in terms of semigroups of operators and infinitesimal generators. The aim of this appendix is to clarify the connection by way of the simplest example, viz., the situation where, as for age-dependent population dynamics, there is but one $i$-state-at-birth.
     
Abstractly we deal, on the `big' space, with linear problems of the form 
\begin{equation*}
    \frac{\dd u}{\dd t} = Au,
\end{equation*}
where $A = A_0 + B$ and $B$ has one-dimensional range. 
Here, $A_0$ generates the semigroup describing development and survival, while $B$ describes reproduction.

When $q^*$ and $q$ are such that 
\begin{equation*}
    B \phi = \langle q^*,\phi\rangle \, q,
\end{equation*}
then $q$ describes the (fixed, i.e., independent of the $i$-state of the mother) distribution of the $i$-state-at-birth, and $q^*$ describes the $i$-state specific fertility. 
In the case of age structure, $q$ is the Dirac delta in zero (represented by a Heaviside function). 

Assume that the half-plane $\{\Re \lambda > - \delta\}$ belongs to the resolvent set of $A_0$ for some $\delta > 0$. 
The spectral problem $A \phi = \lambda \phi$ can be written in the form
\begin{equation*}
    (\lambda I - A_0) \phi = \langle q^*,\phi\rangle \,q.
\end{equation*}    
It follows that $\phi$ is a multiple, say $c$, of $(\lambda I - A_0)^{-1} q$, and next that 
\begin{equation*}
    c = c \, \langle q^*, (\lambda I - A_0)^{-1} q\rangle. 
\end{equation*} 
Since we want a nontrivial $\phi$, we must have $c \ne 0$, and therefore $\lambda$ has to be a root of the characteristic equation
\begin{equation*}
    1 = \langle q^*, (\lambda I - A_0)^{-1} q\rangle . 
\end{equation*}

Another aspect of the kind of problems we study is that the operator $(\lambda I - A_0)^{-1}$ is a positive operator, depending on $\lambda$, as a real variable, in a monotone decreasing way. So the characteristic equation has a real root $\gr > 0$ if and only if $R_0 > 1$, where
\begin{equation}\label{R0-q}
    R_0 := - \langle q^*,  A_0^{-1} q \rangle .
\end{equation}

To connect with the `lifetime reproduction' interpretation, one should observe that 
\begin{equation}\label{invA0}
-A_0^{-1} = \int_0^\infty T_0(a) \dd a,
\end{equation}
and recall that $T_0(a)$ describes the development between birth and age $a$, and that $q$ describes the (distribution of the) $i$-state-at-birth.

As a concrete example, recall Exercise~\ref{ex:PDE-integrated}. In that situation, we have
\begin{equation*}
    (T_0(t) \Psi)(a) =
    \begin{cases}
        0, & a<t, \\
        \int_0^{a-t} \frac{\F(\sigma+t)}{\F(\sigma)} \Psi(\dd\sigma), & a\geq t.  
    \end{cases}
\end{equation*}
The choice $\Psi = q = \tilde{H}$ yields 
\begin{equation*}
    (T_0(t) \tilde{H})(a) =
    \begin{cases}
        0, & a<t, \\
        \F(t), & a\geq t, 
    \end{cases}
\end{equation*}
and when we pair this with $q^* = \beta$ we obtain
\begin{equation*}
\int_0^\infty \beta(\sigma) (T_0(t) \tilde{H})(\dd \sigma) = \beta(t) \F(t).
\end{equation*}
Combination of this identity with \eqref{R0-q} and \eqref{invA0} yields the formula \eqref{R0-integral} for~$R_0$.

\medskip
We conclude with three remarks.
\begin{enumerate}[label=(\roman*),noitemsep,nolistsep]
\item For numerical purposes, we need to be able to compute:
    \begin{itemize}
        \item $(\lambda I - A_0)^{-1} q$ (to determine $R_0$, we only need to do this for $\lambda = 0$);
        \item the pairing with $q^*$ (i.e., a certain integral). 
    \end{itemize}
This can be done for instance by pseudo-spectral methods, as introduced in Section~\ref{s:pseudospectral}. We refer to \citet{Breda2020collocation, Breda2021efficient, Breda2022bivariate} for approaches to compute $R_0$ as the spectral radius of a finite-dimensional matrix that approximates the NGO, and to \citet{DeReggi2024,DeReggi2025} for an approach to compute $R_0$ as the spectral radius of a finite-dimensional matrix that approximates the NGO, which also applies when $B$ has infinite-dimensional range. 

\item One only needs a bit of linear algebra to deal with the situation where $B$ has finite-dimensional range, for instance for systems of equations. Assume that, for $j= 1,2,...,m$, elements $q_j$ of the $p$-state space and elements $q^*_j$ of its dual space exist such that
\begin{equation*}
    B \phi = \sum_{j=1}^m \langle q^*_j , \phi \rangle q_j.
\end{equation*}
Define the entries of the $m\times m$ next generation matrix (NGM) $L$ by
\begin{equation*}
    L_{ij} := \langle q^*_i, -A_0^{-1} q_j\rangle. 
\end{equation*}
These specify the expected number of daughters with $i$-state-at-birth~$i$ of an individual who herself had $i$-state-at-birth~$j$. The basic reproduction number $R_0$ is the spectral radius of the NGM $L$ and, since $L$ is a non-negative matrix, $R_0$ is in fact the dominant eigenvalue of $L$ (where `dominant' means that the modulus of the other eigenvalues does not exceed~$R_0$). 

\item For NGO with infinite-dimensional range in an epidemic context, see for instance \cite{Bootsma2018} and \cite{Diekmann2025}. In that context, one can either work with generations in terms of absolute numbers, or in terms of trait specific fractions of the host population. Mathematically this is equivalent, but the fraction-version shows up in the equation for the expected final size of the epidemic outbreak (and one can understand the connection in terms of the probabilistic interpretation of the final size equation). We refer to \citet{Djidjou2025remarks} for a unifying framework covering structured population models formulated in terms of hyperbolic, parabolic, and delay differential equations.

\end{enumerate}

\bibliographystyle{plainnat}
\bibliography{age-structure.bib}

\begin{thebibliography}{56}
\providecommand{\natexlab}[1]{#1}
\providecommand{\url}[1]{\texttt{#1}}
\expandafter\ifx\csname urlstyle\endcsname\relax
  \providecommand{\doi}[1]{doi: #1}\else
  \providecommand{\doi}{doi: \begingroup \urlstyle{rm}\Url}\fi

\bibitem[And{\`o} et~al.(2022)And{\`o}, Breda, Liessi, Maset, Scarabel, and
  Vermiglio]{Ando202215years}
A.~And{\`o}, D.~Breda, D.~Liessi, S.~Maset, F.~Scarabel, and R.~Vermiglio.
\newblock 15 years or so of pseudospectral collocation methods for stability
  and bifurcation of delay equations.
\newblock In \emph{Accounting for Constraints in Delay Systems}, pages
  127--149. Springer Cham, 2022.

\bibitem[And{\`o} et~al.(2023)And{\`o}, {De Reggi}, Liessi, and
  Scarabel]{Ando2022pseudospectral}
A.~And{\`o}, S.~{De Reggi}, D.~Liessi, and F.~Scarabel.
\newblock A pseudospectral method for investigating the stability of linear
  population models with two physiological structures.
\newblock \emph{Mathemtical Biosciences and Engineering}, 20:\penalty0
  4493--4515, 2023.

\bibitem[Arendt et~al.(1986)Arendt, Grabosch, Greiner, Moustakas, Nagel,
  Schlotterbeck, Groh, Lotz, and Neubrander]{Arendt1986}
W.~Arendt, A.~Grabosch, G.~Greiner, U.~Moustakas, R.~Nagel, U.~Schlotterbeck,
  U.~Groh, H.~P. Lotz, and F.~Neubrander.
\newblock \emph{One-Parameter Semigroups of Positive Operators}, volume 1184 of
  \emph{Lecture Notes in Mathematics}.
\newblock Springer Berlin, Heidelberg, 1986.

\bibitem[Barril et~al.(2022)Barril, Calsina, Diekmann, and Farkas]{Barril2022}
C.~Barril, {\`A}.~Calsina, O.~Diekmann, and J.~Z. Farkas.
\newblock On the formulation of size-structured consumer resource models (with
  special attention for the principle of linearized stability).
\newblock \emph{Mathematical Models and Methods in Applied Sciences},
  32\penalty0 (6):\penalty0 1141--1191, 2022.

\bibitem[B{\'a}tkai et~al.(2017)B{\'a}tkai, Fijav{\v{z}}, and
  Rhandi]{Batkai2017}
A.~B{\'a}tkai, M.~K. Fijav{\v{z}}, and A.~Rhandi.
\newblock \emph{Positive Operator Semigroups}, volume 257 of \emph{Operator
  Theory: Advances and Applications}.
\newblock Birkh\"auser Cham, 2017.

\bibitem[Bellman and Cooke(1963)]{BellmanCooke1963}
R.~Bellman and K.~L. Cooke.
\newblock Differential--difference equations.
\newblock \emph{The RAND Corporation}, 1963.

\bibitem[Bootsma et~al.(2024)Bootsma, Chan, Diekmann, and Inaba]{Bootsma2018}
M.~C.~J. Bootsma, K.~M.~D. Chan, O.~Diekmann, and H.~Inaba.
\newblock The effect of host population heterogeneity on epidemic outbreaks.
\newblock \emph{Mathematics in Applied Sciences and Engineering}, 5:\penalty0
  1--35, 2024.

\bibitem[Breda et~al.(2015)Breda, Maset, and Vermiglio]{BMVbook}
D.~Breda, S.~Maset, and R.~Vermiglio.
\newblock \emph{Stability of Linear Delay Differential Equations: A Numerical
  Approach with MATLAB}.
\newblock SpringerBriefs in Control, Automation and Robotics. Springer New
  York, NY, 2015.
\newblock ISBN 9781493921072.

\bibitem[Breda et~al.(2016)Breda, Diekmann, Gyllenberg, Scarabel, and
  Vermiglio]{Breda2016Prospects}
D.~Breda, O.~Diekmann, M.~Gyllenberg, F.~Scarabel, and R.~Vermiglio.
\newblock Pseudospectral discretization of nonlinear delay equations: new
  prospects for numerical bifurcation analysis.
\newblock \emph{SIAM Journal on Applied Dynamical Systems}, 15\penalty0
  (1):\penalty0 1--23, 2016.

\bibitem[Breda et~al.(2020)Breda, Kuniya, Ripoll, and
  Vermiglio]{Breda2020collocation}
D.~Breda, T.~Kuniya, J.~Ripoll, and R.~Vermiglio.
\newblock Collocation of next-generation operators for computing the basic
  reproduction number of structured populations.
\newblock \emph{Journal of Scientific Computing}, 85\penalty0 (2):\penalty0 40,
  2020.

\bibitem[Breda et~al.(2021)Breda, Florian, Ripoll, and
  Vermiglio]{Breda2021efficient}
D.~Breda, F.~Florian, J.~Ripoll, and R.~Vermiglio.
\newblock Efficient numerical computation of the basic reproduction number for
  structured populations.
\newblock \emph{Journal of Computational and Applied Mathematics},
  384:\penalty0 113165, 2021.

\bibitem[Breda et~al.(2022)Breda, {De Reggi}, Scarabel, Vermiglio, and
  Wu]{Breda2022bivariate}
D.~Breda, S.~{De Reggi}, F.~Scarabel, R.~Vermiglio, and J.~Wu.
\newblock Bivariate collocation for computing {$R_0$} in epidemic models with
  two structures.
\newblock \emph{Computers \& Mathematics with Applications}, 116:\penalty0
  15--24, 2022.

\bibitem[Butzer and Berens(1967)]{ButzerBerens1967}
P.~P. Butzer and H.~Berens.
\newblock \emph{Semi-Groups of Operators and Approximation}, volume 145 of
  \emph{Grundlehren der Mathematischen Wissenschaften}.
\newblock Springer Berlin, Heidelberg, 1967.

\bibitem[Cassidy(2025)]{Cassidy2025using}
T.~Cassidy.
\newblock Using multidelay discrete delay differential equations to accurately
  simulate models with distributed delays.
\newblock \emph{Studies in Applied Mathematics}, 154\penalty0 (6):\penalty0
  e70069, 2025.

\bibitem[Cassidy et~al.(2022)Cassidy, Gillich, Humphries, and van
  Dorp]{Cassidy2022numerical}
T.~Cassidy, P.~Gillich, A.~R. Humphries, and C.~H. van Dorp.
\newblock Numerical methods and hypoexponential approximations for {Gamma}
  distributed delay differential equations.
\newblock \emph{IMA Journal of Applied Mathematics}, 87\penalty0 (6):\penalty0
  1043--1089, 2022.

\bibitem[Davis(1975)]{Davis1975book}
D.~J. Davis.
\newblock \emph{Interpolation and approximation}.
\newblock Courier Corporation, 1975.

\bibitem[{De Reggi} et~al.(2024{\natexlab{a}}){De Reggi}, Scarabel, and
  Vermiglio]{DeReggi2024}
S.~{De Reggi}, F.~Scarabel, and R.~Vermiglio.
\newblock Approximating reproduction numbers: a general numerical approach for
  age-structured models.
\newblock \emph{Mathematical Biosciences and Engineering}, 21\penalty0
  (4):\penalty0 5360--5393, 2024{\natexlab{a}}.

\bibitem[{De Reggi} et~al.(2024{\natexlab{b}}){De Reggi}, Scarabel, and
  Vermiglio]{DeReggi2025}
S.~{De Reggi}, F.~Scarabel, and R.~Vermiglio.
\newblock On the convergence of the pseudospectral approximation of
  reproduction numbers for age-structured models.
\newblock \emph{arXiv}, arXiv:2409.01520, 2024{\natexlab{b}}.

\bibitem[De~Roos and Persson(2013)]{DeRoosPersson2013}
A.~M. De~Roos and L.~Persson.
\newblock \emph{Population and Community Ecology of Ontogenetic Development},
  volume~51 of \emph{Monographs in Population Biology}.
\newblock Princeton University Press, 2013.

\bibitem[{De Wolff} et~al.(2021){De Wolff}, Scarabel, Verduyn~Lunel, and
  Diekmann]{DeWolff2021}
B.~A.~J. {De Wolff}, F.~Scarabel, S.~M. Verduyn~Lunel, and O.~Diekmann.
\newblock Pseudospectral approximation of {Hopf} bifurcation for delay
  differential equations.
\newblock \emph{SIAM Journal on Applied Dynamical Systems}, 20\penalty0
  (1):\penalty0 333--370, 2021.

\bibitem[Dhooge et~al.(2008)Dhooge, Govaerts, Kuznetsov, Meijer, and
  Sautois]{MatCont2008}
A.~Dhooge, W.~Govaerts, Yu.~A. Kuznetsov, H.~G.~E. Meijer, and B.~Sautois.
\newblock New features of the software {MatCont} for bifurcation analysis of
  dynamical systems.
\newblock \emph{Mathematical and Computer Modelling of Dynamical Systems},
  14\penalty0 (2):\penalty0 147--175, 2008.

\bibitem[Diekmann and Gyllenberg(2012)]{Diekmann2012Blending}
O.~Diekmann and M.~Gyllenberg.
\newblock Equations with infinite delay: blending the abstract and the
  concrete.
\newblock \emph{Journal of Differential Equations}, 252\penalty0 (2):\penalty0
  819--851, 2012.

\bibitem[Diekmann and Korvasov\'a(2013)]{Diekmann2013didactical}
O.~Diekmann and K.~Korvasov\'a.
\newblock A didactical note on the advantage of using two parameters in {Hopf}
  bifurcation studies.
\newblock \emph{Journal of Biological Dynamics}, 7\penalty0 (sup1):\penalty0
  21--30, 2013.

\bibitem[Diekmann and Scarabel(2025)]{DiekmannScarabelSize}
O.~Diekmann and F.~Scarabel.
\newblock Size-structured population dynamics.
\newblock \emph{arXiv}, arXiv:2506.03413, 2025.

\bibitem[Diekmann and Verduyn~Lunel(2021)]{Diekmann2021Twin}
O.~Diekmann and S.~M. Verduyn~Lunel.
\newblock Twin semigroups and delay equations.
\newblock \emph{Journal of Differential Equations}, 286:\penalty0 332--410,
  2021.

\bibitem[Diekmann et~al.(1990)Diekmann, Heesterbeek, and
  Metz]{DiekmannHeesterbeek1990}
O.~Diekmann, J.~A.~P. Heesterbeek, and J.~A.~J. Metz.
\newblock On the definition and the computation of the basic reproduction ratio
  {$R_0$} in models for infectious diseases in heterogeneous populations.
\newblock \emph{Journal of Mathematical Biology}, 28:\penalty0 365--382, 1990.

\bibitem[Diekmann et~al.(1995)Diekmann, Van~Gils, Verduyn~Lunel, and
  Walther]{Diekmann1995Delay}
O.~Diekmann, S.~A. Van~Gils, S.~M. Verduyn~Lunel, and H.-O. Walther.
\newblock \emph{Delay Equations: Functional-, Complex-, and Nonlinear
  Analysis}, volume 110 of \emph{Applied Mathematical Sciences}.
\newblock Springer New York, NY, 1995.

\bibitem[Diekmann et~al.(2008)Diekmann, Getto, and
  Gyllenberg]{Diekmann2008Suns}
O.~Diekmann, P.~Getto, and M.~Gyllenberg.
\newblock Stability and bifurcation analysis of {Volterra} functional equations
  in the light of suns and stars.
\newblock \emph{SIAM Journal on Mathematical Analysis}, 39\penalty0
  (4):\penalty0 1023--1069, 2008.

\bibitem[Diekmann et~al.(2013)Diekmann, Heesterbeek, and
  Britton]{Diekmann2013EpiBook}
O.~Diekmann, H.~Heesterbeek, and T.~Britton.
\newblock \emph{Mathematical Tools for Understanding Infectious Disease
  Dynamics}.
\newblock Princeton Series in Theoretical and Computational Biology. Princeton
  University Press, 2013.

\bibitem[Diekmann et~al.(2018)Diekmann, Gyllenberg, and
  Metz]{Diekmann2018finite}
O.~Diekmann, M.~Gyllenberg, and J.~A.~J. Metz.
\newblock Finite dimensional state representation of linear and nonlinear delay
  systems.
\newblock \emph{Journal of Dynamics and Differential Equations}, 30:\penalty0
  1439--1467, 2018.

\bibitem[Diekmann et~al.(2025)Diekmann, Inaba, and Thieme]{Diekmann2025}
O.~Diekmann, H.~Inaba, and H.~R. Thieme.
\newblock Mathematical epidemiology of infectious diseases: an ongoing
  challenge.
\newblock \emph{Japan Journal of Industrial and Applied Mathematics}, 2025.
\newblock URL \url{\url{https://doi.org/10.1007/s13160-025-00742-1}}.

\bibitem[Djidjou-Demasse et~al.(2025)Djidjou-Demasse, Ducrot, Pane, and
  Seydi]{Djidjou2025remarks}
R.~Djidjou-Demasse, A.~Ducrot, M.~Pane, and O.~Seydi.
\newblock Remarks on invasion threshold for structured population models.
\newblock \emph{hal.science}, 2025.

\bibitem[Engel and Nagel(2000)]{EngelNagel2000}
K.-J. Engel and R.~Nagel.
\newblock \emph{One-Parameter Semigroups for Linear Evolution Equations},
  volume 194 of \emph{Graduate Texts in Mathematics}.
\newblock Springer New York, NY, 2000.

\bibitem[Feller(1941)]{Feller1941}
W.~Feller.
\newblock On the integral equation of renewal theory.
\newblock \emph{The Annals of Mathematical Statistics}, 12:\penalty0 243--267,
  1941.

\bibitem[Feller(1971)]{Feller1971volII}
W.~Feller.
\newblock \emph{An Introduction to Probability Theory and Its Applications,
  vol.~II}.
\newblock Wiley, New York, 1971.

\bibitem[Gautschi(2000)]{Gautschi2004book}
W.~Gautschi.
\newblock \emph{Orthogonal Polynomials: Computation and Approximation}.
\newblock Oxford Academic, 2000.

\bibitem[Gripenberg et~al.(1990)Gripenberg, Londen, and
  Staffans]{Gripenberg1990}
G.~Gripenberg, S.-O. Londen, and O.~Staffans.
\newblock \emph{Volterra Integral and Functional Equations}.
\newblock Cambridge University Press, 1990.

\bibitem[Guo and Wu(2013)]{GuoWu2013}
G.~Guo and J.~Wu.
\newblock \emph{Bifurcation Theory of Functional Differential Equations},
  volume 184 of \emph{Applied Mathematical Sciences}.
\newblock Springer New York, NY, 2013.

\bibitem[Gurney et~al.(1980)Gurney, Blythe, and Nisbet]{Gurney1980}
W.~S.~C. Gurney, S.~P. Blythe, and R.~M. Nisbet.
\newblock Nicholson's blowflies revisited.
\newblock \emph{Nature}, 287:\penalty0 17--21, 1980.

\bibitem[Gyllenberg et~al.(2018)Gyllenberg, Scarabel, and
  Vermiglio]{Gyllenberg2018}
M.~Gyllenberg, F.~Scarabel, and R.~Vermiglio.
\newblock Equations with infinite delay: Numerical bifurcation analysis via
  pseudospectral discretization.
\newblock \emph{Applied Mathematics and Computation}, 333:\penalty0 490--505,
  2018.

\bibitem[Inaba(2017)]{Inaba2017book}
H.~Inaba.
\newblock \emph{Age-Structured Population Dynamics in Demography and
  Epidemiology}.
\newblock Springer Singapore, 2017.

\bibitem[Janssens(2020)]{Janssens2025}
S.~G. Janssens.
\newblock A class of abstract delay differential equations in the light of suns
  and stars. ii.
\newblock \emph{arXiv}, arXiv:2003.13341, 2020.

\bibitem[Kermack and McKendrick(1927)]{KermackMcKendrick1927}
W.~O. Kermack and A.~G. McKendrick.
\newblock A contribution to the mathematical theory of epidemics.
\newblock \emph{Proceedings of the Royal Society of London. Series A,
  Containing papers of a mathematical and physical character}, 115\penalty0
  (772):\penalty0 700--721, 1927.

\bibitem[Lasota and Mackey(1994)]{LasotaMackey2013Book}
A.~Lasota and M.~C. Mackey.
\newblock \emph{Chaos, Fractals, and Noise: Stochastic Aspects of Dynamics},
  volume~97 of \emph{Applied Mathematical Sciences}.
\newblock Springer New York, NY, 1994.

\bibitem[Liessi et~al.(2025)Liessi, Santi, Vermiglio, Thakur, Meijer, and
  Scarabel]{Liessi-matcont}
D.~Liessi, E.~Santi, R.~Vermiglio, M.~Thakur, H.~G.~E. Meijer, and F.~Scarabel.
\newblock New functionalities in {MatCont}: delay equations and {Lyapunov}
  exponents.
\newblock \emph{arXiv}, arXiv:2504.12785, 2025.

\bibitem[Lotka(1939)]{Lotka1939}
A.~J. Lotka.
\newblock On an integral equation in population analysis.
\newblock \emph{The Annals of Mathematical Statistics}, 10\penalty0
  (2):\penalty0 144--161, 1939.

\bibitem[Magal and Ruan(2018)]{MagalRuan2018book}
P.~Magal and S.~Ruan.
\newblock \emph{Theory and Applications of Abstract Semilinear Cauchy
  Problems}, volume 201 of \emph{Applied Mathematical Sciences}.
\newblock Springer Cham, 2018.

\bibitem[Metz and Diekmann(1986)]{MetzDiekmann1986}
J.~A.~J. Metz and O.~Diekmann.
\newblock \emph{The Dynamics of Physiologically Structured Populations},
  volume~68 of \emph{Lecture Notes in Biomathematics}.
\newblock Springer Berlin, Heidelberg, 1986.

\bibitem[Perthame(2006)]{Perthame2006Book}
B.~Perthame.
\newblock \emph{Transport Equations in Biology}.
\newblock Birkh\"auser Basel, 2006.

\bibitem[Rudin(1973)]{Rudin1973Book}
W.~Rudin.
\newblock \emph{Functional Analysis}.
\newblock McGraw-Hill, 1973.

\bibitem[Rudnicki and Tyran-Kami{\'n}ska(2017)]{Rudnicki2017Book}
R.~Rudnicki and M.~Tyran-Kami{\'n}ska.
\newblock \emph{Piecewise Deterministic Processes in Biological Models}.
\newblock SpringerBriefs in Mathematical Methods. Springer Cham, 2017.

\bibitem[Scarabel and Vermiglio(2024)]{Scarabel2024Infinite}
F.~Scarabel and R.~Vermiglio.
\newblock Equations with infinite delay: pseudospectral discretization for
  numerical stability and bifurcation in an abstract framework.
\newblock \emph{SIAM Journal on Numerical Analysis}, 62\penalty0 (4):\penalty0
  1736--1758, 2024.

\bibitem[Scarabel et~al.(2021{\natexlab{a}})Scarabel, Breda, Diekmann,
  Gyllenberg, and Vermiglio]{Scarabel2021Vietnam}
F.~Scarabel, D.~Breda, O.~Diekmann, M.~Gyllenberg, and R.~Vermiglio.
\newblock Numerical bifurcation analysis of physiologically structured
  population models via pseudospectral approximation.
\newblock \emph{Vietnam Journal of Mathematics}, 49:\penalty0 37--67,
  2021{\natexlab{a}}.

\bibitem[Scarabel et~al.(2021{\natexlab{b}})Scarabel, Diekmann, and
  Vermiglio]{Scarabel2021JCAM}
F.~Scarabel, O.~Diekmann, and R.~Vermiglio.
\newblock Numerical bifurcation analysis of renewal equations via
  pseudospectral approximation.
\newblock \emph{Journal of Computational and Applied Mathematics},
  397:\penalty0 113611, 2021{\natexlab{b}}.

\bibitem[Thieme(1990)]{Thieme1990integrated}
H.~R. Thieme.
\newblock ``{Integrated} semigroups'' and integrated solutions to abstract
  {Cauchy} problems.
\newblock \emph{Journal of Mathematical Analysis and Applications},
  152\penalty0 (2):\penalty0 416--447, 1990.

\bibitem[Trefethen(2000)]{Trefethen2000spectral}
L.~N. Trefethen.
\newblock \emph{Spectral methods in {MATLAB}}.
\newblock Software, Environments, and Tools. SIAM, 2000.

\bibitem[Webb(1985)]{Webb1985Book}
G.~F. Webb.
\newblock \emph{Theory of Nonlinear Age-Dependent Population Dynamics}.
\newblock CRC Press, 1985.

\end{thebibliography}

\end{document}